\documentclass[11pt,a4paper]{article}
\usepackage[margin=1in]{geometry}
\usepackage[colorinlistoftodos]{todonotes}
\usepackage{graphicx}
\usepackage{rotating}
\usepackage[autostyle, english=american]{csquotes}
\MakeOuterQuote{"}
\usepackage{comment}
\newcommand{\indep}{\rotatebox[origin=c]{90}{$\models$}}
\usepackage[utf8]{inputenc}
\usepackage[english]{babel}
\usepackage{todonotes} 
\usepackage{algpseudocode}
\usepackage{algorithmicx}
\usepackage{algorithm}
\usepackage{breakcites}
\usepackage{microtype}
\floatname{algorithm}{Procedure}

\newcommand{\beq}{\begin{equation}}
\newcommand{\balign}{\begin{align}}
\newcommand{\eeq}{\end{equation}}
\newcommand{\ealign}{\end{align}}
\usepackage{url}
\usepackage{fixltx2e}
\usepackage{float}
\usepackage{hhline}
\usepackage{caption}
\usepackage{graphicx}
\usepackage{amsmath}

\setlength\parindent{1pt}
\setlength\parskip{1em}
\captionsetup{skip=0pt}

\usepackage{amsthm}
\theoremstyle{definition}
\newtheorem{definition}{Definition}[section] 
\theoremstyle{remark}

\theoremstyle{definition}
\newtheorem{example}{Example}[section]

\theoremstyle{definition}

\theoremstyle{definition}
\newtheorem{question}{Question}[section]

\theoremstyle{definition}
\newtheorem{idea}{Idea}[section]

\theoremstyle{definition}
\newtheorem{modification}{Modification}[section]

\theoremstyle{definition}
\newtheorem{assumption}{Assumption
}[section]

\newtheorem{theorem}{Theorem}[section]
\newtheorem{lemma}[theorem]{Lemma}
\newtheorem{proposition}[theorem]{Proposition}

\begin{document}

\title{Counterfactual-based Incrementality Measurement in a Digital Ad-Buying Platform}

\author{
  Prasad Chalasani\\
  MediaMath\\
  \texttt{\small pchalasani@mediamath.com}\\
\and
  Ari Buchalter \\
  MediaMath\\
  \texttt{\small abuchalter@mediamath.com}\\
\and
  Jaynth Thiagarajan \\
  MediaMath\\
  \texttt{\small jthiagarajan@mediamath.com}\\  
\and
  Ezra Winston\\
  Carnegie Mellon University\\
}

\date{\today}
\maketitle

\begin{abstract}
The problem of measuring the true incremental effectiveness of a digital advertising campaign is of increasing importance to marketers. 
With a large and increasing percentage of digital advertising delivered via Demand-Side-Platforms (DSPs) executing campaigns via Real-Time-Bidding (RTB) auctions and programmatic approaches, 
a measurement solution that satisfies both advertiser concerns and the constraints of a DSP is of particular interest. 

MediaMath (a DSP) has developed the first practical, statistically sound randomization-based methodology for causal ad effectiveness (or Ad Lift) measurement by a DSP (or similar digital advertising execution system that may not have full control over the advertising transaction mechanisms).
We describe our solution and establish its soundness within the causal framework of counterfactuals and potential outcomes, and present a Gibbs-sampling procedure for estimating confidence intervals around the estimated Ad Lift. 
We also address  practical complications (unique to the digital advertising setting) 
that stem from the fact that digital advertising is targeted and measured via identifiers (e.g., cookies, mobile advertising IDs) that may not be stable over time. One such complication is the repeated occurrence of identifiers, leading to interference among observations. 
Another is due to the possibility of multiple identifiers being associated with the same consumer, leading to "contamination" with some of their identifiers being assigned to the Treatment group and others to the Control group. 
Complications such as these have severely impaired previous efforts to derive accurate measurements of lift in practice.

In contrast to a few other papers on the subject, this paper has an expository aim as well, and 
provides a rigorous, self-contained, and readily-implementable treatment of all relevant concepts.
\end{abstract}

\section{Introduction}

As global digital advertising budgets continue to grow, overtaking TV as the dominant form of advertising \cite{emarketer-2016},
the problem of measuring the \textit{effectiveness} of ad campaigns is ever-important: 
advertisers want to ensure the budgets they spend on ad campaigns across search, display, mobile, video, social, email, and other digital channels are truly responsible for driving desired consumer behaviors. In particular, they want to understand not just what consumer behaviors and business outcomes occurred \textit{after} exposure to advertising, but also what occurred \textit{because} of exposure to advertising, i.e., the true causal impact.

At first glance it might seem that the ad-impact measurement problem is  straightforward, at least in theory,  
given the availability of vast amounts of data generated around digital advertising. In fact, one of the defining characteristics of digital advertising is precisely that it is "addressable", meaning that individual events such as the delivery of an ad to a consumer (e.g., within a website, mobile app, or other environment) and the subsequent online actions of that consumer (e.g., product purchases) can be tracked and measured. However as we describe in detail in this paper, even defining what we mean by "the effectiveness of an ad campaign" can be tricky. Once we have an acceptable \textit{definition}, the practical \textit{measurement} of ad effectiveness is far from straightforward, especially when the ad impact measurement must be done within the constraints and practical considerations of a Demand-Side-Platform (DSP) such as MediaMath, or a similar buy-side digital advertising execution system, on behalf of clients (advertisers) running campaigns on its platform.

This paper presents a detailed account of MediaMath's methodology for causal measurement of ad effectiveness 
using the  framework of \textit{counterfactuals} and \textit{potential outcomes}, which has emerged as a solid foundation on which to develop measures of causality. As we describe in the related work (Sec. \ref{sec-related}), we are certainly not the first to consider digital ad effectiveness from this viewpoint, so we mention here a few aspects that are unique to our paper. Unlike other papers in the advertising field, this work provides what we hope is a highly readable and \textit{self-contained} presentation of the causal framework as applied to digital advertising. Most other work in this area present mathematical notation with little build-up of background and expect the reader to either already be familiar with the causal analysis framework, or leave the reader to their own devices to consult the literature to understand it. 

This paper is also the first to present a detailed, comprehensive, and self-consistent ad-measurement solution that can be implemented by a DSP (henceforth, we shall use term “DSP” to generically refer not only to Demand-Side Platforms, but to any digital ad-buying system).
A number of practical constraints and considerations required us to develop an innovative methodology for ad-impact measurement at MediaMath. One constraint arises from the fact that many digital ads are transacted through a real-time auction, where DSPs compete for opportunities (“bid opportunities ”) to deliver ads to consumers in real time. 
In general, a DSP does not have full visibility into the ad auction after submitting a bid to an ad exchange or Supply-Side Platform (SSP) on behalf of an advertiser, nor full control over the auction mechanism and outcome. 
This precludes certain solutions that exist in the literature, such as Ghost ads \cite{Johnson_Ghost_2015}, which may be more relevant for so-called “walled garden” platforms such as Google, Facebook, and Amazon, who can see and control both the “buy” and “sell” sides of the auction process. 
Another constraint is that the simplest possible way of measuring ad-impact, namely, randomizing the ad opportunity to test/control \textit{after winning} the ad auction, is unacceptable to most advertisers because they have already paid for the winning bid, and hence would be wasting significant ad spend if the opportunity were assigned to the control group. 
To address this concern, our methodology instead randomizes \textit{before bid submission}, and this introduces the complication that not every consumer in the test group is exposed to the ad (since some bids would lose in the auction). 
This phenomenon is known as \textit{non-compliance} in the clinical trials literature (where certain people in the test group do not take the drug being studied). 
Non-compliance complicates causal ad-impact measurement because in general winning in the auction is not a random process, so consumers in the test group who are not exposed to ads may have systematically different response profiles than those who were exposed. This again owes to auction dynamics: unexposed consumers are those for whom the DSP was outbid in the auction, presumably because another buyer thought that consumer was more valuable (i.e., responsive). In other words, the exposed consumers in the test population are subject to a \textit{selection bias}, and in our context we refer to this as \textit{win bias}. 
We develop a methodology that builds upon some ideas from the clinical trials literature, adapting them to the advertising context in a manner that accounts for this bias.

In addition to merely producing a \textit{point estimate} of causal impact, it is also important to compute a suitable \textit{confidence interval} around that estimate. We adopt and simplify a Gibbs-sampling-based scheme from the causal analysis literature (specifically \cite{Chickering1996}) to compute this confidence interval. This paper presents a self-contained introduction to Gibbs-Sampling and is the first to describe a readily-implementable algorithm applying this technique to the problem of estimating confidence intervals on the causal effect. 

We also consider some critical real-world complications that arise in ad-effectiveness measurement that, to our knowledge, have not been adequately discussed elsewhere, and present modifications of our basic methodology to handle these complications. These have to do with the fact that digital advertising is delivered to consumers using devices and browsers that are tracked by identifiers such as cookies (within web browsers) and mobile advertising IDs (such as Apple IDFAs or Google Android IDs within mobile apps) which may or may not be stable over time. These identifiers constitute the data foundation for determining whether an ad was delivered to a particular device/consumer and whether a corresponding action (such as a purchase) was taken by that device/consumer. One complication we consider is the occurrence of the same identifier in different bid opportunities over time. This means that the responsiveness associated with the identifier in one bid opportunity may be affected by whether that identifier was exposed to an ad in a recent bid opportunity, violating a fundamental assumption behind the causal framework. Another complication is the presence of multiple identifiers for the same human user, causing two widespread forms of "ID contamination" -- "cookie contamination" and "cross-device contamination". Cookie contamination refers to the fact that the primary identifiers used to track browser-based usage \textit{on a single device} are typically not persistent over time, either because the browsers do not allow cookies to track them in the first place or because browsers or users periodically clear out the cookies. Either way, this means the same consumer will appear to have multiple cookie identifiers, some of which will inevitably and unknowingly be placed in Test and some in Control, leading to contamination of the populations. Cross-device contamination refers to the fact that even if all identifiers associated with a given device were stable, consumers don’t just encounter ads and make purchases on one device. The average US consumer owns nearly 4 connected devices (such as laptop and desktop computers, smart-phones, connected TVs, and gaming consoles; see   \cite{buckle-2016}),
and that is likely to grow over time with trends such as the Internet of Things, digital homes, wearable devices, etc. Absent certain knowledge of which devices are owned by which consumers, it is inevitable that some of the identifiers for a given consumer will correspond to device placed in Test, and some in Control, again leading to contamination. Moreover, these two forms of ID contamination are not mutually exclusive, which further compounds the problem.

We believe that the various issues and complications noted above have critically undermined most previous attempts to measure causal ad impact. Notably, MediaMath has observed many instances among its clients and their measurement partners, where measurement efforts in practice have yielded  no lift, or even negative lift! In fact, such results are the norm, and in the few cases where strong, positive lift is observed, it has tended to be short-lived, fluctuating strongly in time. These  results fly in the face of not only human intuition, but billions of dollars in ad spending. In developing the first practical and self-consistent methodology for DSP measurement of true causal ad effectiveness, we present solutions to all of the aforementioned issues and complications, and have in fact observed significant, positive, and stable lift when applying our methodology in practice, as will be demonstrated here.

It is worth noting here that while MediaMath currently engages predominantly in the execution of various forms of so-called “display” advertising (banners, videos, native ads, and other formats delivered on content websites, in mobile apps, on social platforms, and other digital media environments, across smart-phones, computers, and other connected devices), the methodology and techniques described in this paper can in principle be applied to all forms of digital advertising, including display, search, email, etc.

The following is an outline of the Paper. Section \ref{sec-define} starts with a detailed treatment of the definition of ad impact, and presents the causal framework of \textit{potential outcomes} and \textit{counterfactuals}. 
In Section \ref{sec-rtb} we describe the setting to which this paper applies, i.e. a DSP executing digital advertising, and also the types of data that need to be logged in order to measure ad impact.
Section \ref{sec-psa} starts with the simplest possible method to measure ad-impact, which is \textit{post-bid randomization}, and points out why this is a wasteful approach.
Then Section \ref{sec-prebid} describes MediaMath's  \textit{pre-bid randomization} approach and Subsection \ref{sec-causal-pre-bid} presents the problem of causal effect measurement in this scenario. Measuring ad impact under pre-bid randomization leads to the phenomenon of \textit{non-compliance} (which manifests here as win bias),
and Section \ref{sec-win-bias} introduces the mathematical machinery 
needed to conduct causal analysis under non-compliance. 
This section contains the core of our methodology for computing a point-estimate of Ad Lift. 
The Gibbs Sampling scheme for computing confidence-intervals is presented in detail in Section \ref{sec-gibbs}. 
Sections \ref{sec-repeat} and \ref{sec-contam} describe how our methodology needs to be modified to handle the complications of recurring identifiers and ID contamination, respectively. 
In Section \ref{sec-results} we present experimental results of actual Ad Impact measurements (and confidence intervals) for several campaigns, and show a detailed numerical example of our methodology applied to one campaign.
Section \ref{sec-related} discusses related work, and  
Section \ref{sec-future} concludes with an outline of future work.

\section{Defining Ad Impact} \label{sec-define}

Advertisers typically run “ad campaigns” to generate awareness and interest in their products, and influence consumers to buy them. One of the fundamental questions an advertiser wants to answer is: 
\begin{quote}
How effective is my advertising campaign?	
\end{quote}

It is worth noting that we are interested in measuring the impact of a \textit{specific ad campaign}. In general consumers are exposed to a variety of advertisements, both offline and online (which we can think of collectively as "background noise") and we are interested here in measuring the effectiveness of a specific campaign. However, we note that the methodology presented here can be adapted to measure the effectiveness of multiple campaigns, of a particular digital channel (e.g., search or display), or across all of an advertiser’s campaigns across all digital channels. These generalizations will be explored in later work. 

We will present a sequence of increasingly precise formulations of the above question, setting the stage for a rigorous statistical framework. 
Any quantification of ad effectiveness must specify a {\em desired outcome} that the advertiser wishes to elicit when exposing consumers to ads. In other words, advertisers want to quantify:

\begin{question}\small
What is the impact of {\em exposure to my ad campaign} in driving my {\em desired outcome}? 	
\end{question}

The desired outcome is a specific consumer behavior defined by the advertiser, such as a site visit, registration, subscription, addition of items to a shopping cart, purchase, etc. We use the general terms {\em response} or {\em conversion} to refer to such a desired behavior, and we use the term  {\em response rate} to generically denote the probability of a response, either at the individual (ad) level or aggregate (campaign) level (these will be defined more rigorously later).

It is important to emphasize causation here: for instance if the advertiser finds that consumers exposed to their ads, on average, have a response rate of 3\%, then it does not necessarily follow that this 3\% response rate (sometimes referred to as the "aggregate" or "overall" or "top-line" response rate) was entirely {\em caused} by exposure to their ad. Some of those conversions might have occurred anyway in the absence of exposure to the ad campaign being measured; consumers might have visited the website, purchased the product, etc. without having seen any ad at all, or after seeing an ad from a different campaign than the one being analyzed. So a more relevant question from the advertiser's perspective is, 

\begin{question}\small
	How much of the response rate of exposed consumers is  \textit{caused} by my ad campaign? 
\end{question}

For instance it is possible that even without seeing the ad, these {\em exposed} consumers {\em would have had} a response rate of 2\%. Thus the {\bf incremental} effect of this campaign is only 1\%, and it would be reasonable to say that out of the 3\% total response rate of consumers exposed to the campaign, only 1\% is {\emph caused} by this campaign.\footnote{This point is especially germane in the context of display advertising, also sometimes referred to as "banner advertising". For display ads, a consumer might click on an ad, thereby leading them to perform some downstream conversion behavior. In this case, the presence of a click usually implies some casual relationship between the ad and the conversion. However, the link between the display ad and the ultimate conversion may be less direct; consumers seeing these ads may instead perform a search related to the ad, directly navigate to a website or mobile app, or simply be more inclined to make an online or offline purchase in the future as a result of greater awareness and/or affinity.}

It is precisely this incremental (or causal) effect that advertisers seek to measure; they want to direct ad spending towards  campaigns shown to have larger incremental effects, or target consumers whose incremental response is likely to be higher.

Thus a more precise formulation of the ad effectiveness question is:
\begin{question} \label{q-incr}\small
	How much higher is the response rate $R_e$ of exposed consumers, compared to $R'_e$, the response rate {\em they would have had} if they had {\em not} been exposed to the ad campaign? In other words, what is the \textbf{incremental} effect attributable to the ad exposure?
\end{question}

Clearly it is not possible to observe $R'_e$, the response rate exposed consumers would have had if they had not seen the ad (in exactly the same context, i.e. time, location, website, etc.), and so this is called a {\em counterfactual} response rate. 
Nevertheless, it is possible to measure the causal effect under some conditions, which we highlight here, and make more precise later:
\begin{idea}\label{idea-stat}\small
If the exposed and unexposed populations are {\em statistically equivalent} then we can validly compare their response rates to measure the causal effect of the campaign. 
\end{idea}
Note that if we simply take the difference in response rates of the exposed and unexposed populations during the normal course of running an ad campaign, this would in general \textit{not} yield a valid measure of incrementality since the two populations would not be statistically equivalent (e.g., most commonly due to the presence of campaign targeting settings that apply to the entire exposed population, but not to the entire unexposed population).
There are, however, approaches to infer causality or incrementality from such \textit{observational data} (see, e.g. \cite{Austin_An_2011} and other references in the Related Work Section \ref{sec-related}). These approaches tend to be highly assumption-driven, prone to biases, and difficult to validate in practice, and are therefore best suited for incrementality measurement at the individual treatment level (i.e., consumer level), where counterfactuals cannot be established. At the aggregate level (where statistically equivalent counterfactual samples for Test and Control can be defined), randomized experimental testing is preferred over observational methods, and the aim of this work is to derive unbiased estimates of causality using an experimental approach that builds on established frameworks, accounts for real-world factors, and can be easily reproduced and verified.

\subsection{Potential Outcomes and Counterfactuals} \label{sec-potential}

Estimating causal effects is a fundamental problem in many fields, and the framework of counterfactuals and potential outcomes has emerged as 
the most widely used one for causal analysis; e.g., see \cite{Little_Causal_2000} for a thorough introduction. 
We adopt notation from this literature in what follows.

For any individual consumer $i$, let $W_i$ be a binary random variable indicating whether the consumer sees the ad ($W_i = 1$) or not ($W_i = 0$). Borrowing from the clinical trials literature, when $W_i=1$ we say the consumer is {\em treated} (i.e., in this case, exposed), otherwise the consumer is {\em not treated} (unexposed).
\footnote{Although in the clinical trials literature the terms treated/untreated are common, we will instead use the terms exposed/unexposed in most contexts.} 
We also define a binary random variable $Y_i$ to denote the consumer's {\em response} to the treatment $W_i$: If the consumer {\em responds} to the treatment (i.e. performs the desired behavior, such as a conversion, etc.), $Y_i(W_i)=1$, and otherwise, $Y_i(W_i)=0$. Note that $Y_i(1)$ represents the response of consumer $i$ under treatment, and $Y_i(0)$ is the response under no-treatment. These two possible values are called {\bf potential outcomes}. Figure \ref{fig-counterfactuals} 
illustrates these concepts, and others introduced in this section.

\begin{figure}\centering
\includegraphics[width = 0.8\textwidth, angle = 270]{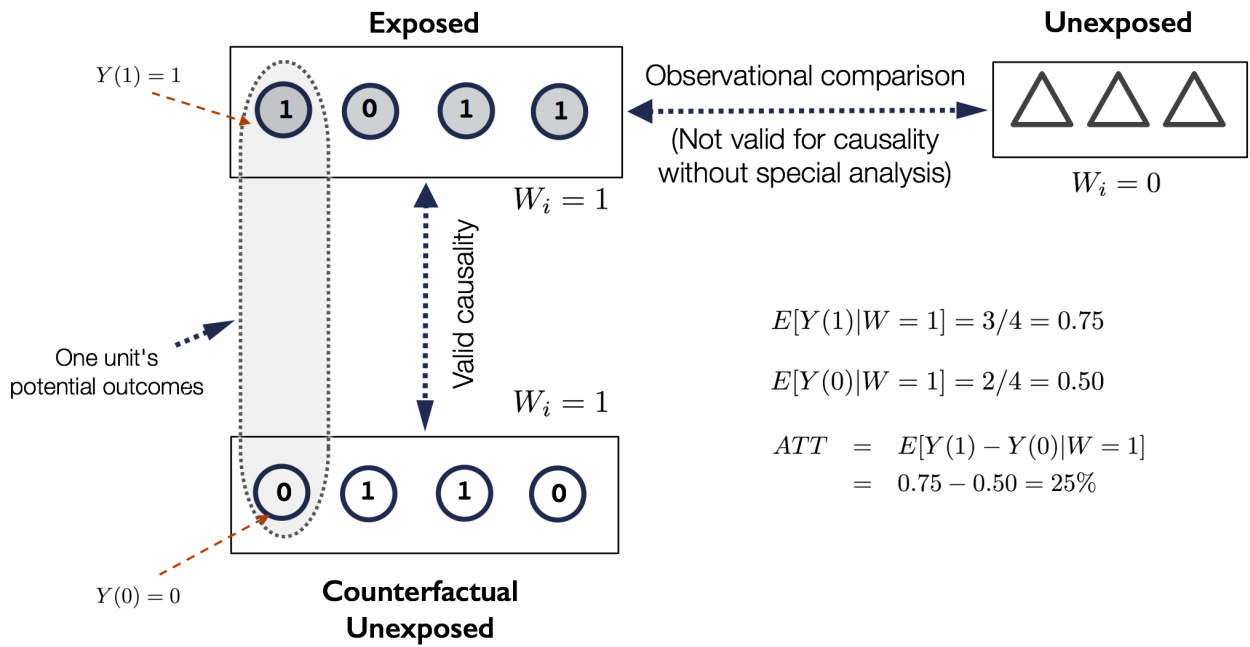}
\caption{\small The definition of causal ad impact  in terms of counterfactuals and potential outcomes. 
Each circle or triangle represents a "unit" in the analysis (taken to be a consumer represented by an identifier or "userID"). 
Circles denote the units $i$ in the population that happen to have been exposed during the normal course of running an ad campaign -- these have  $W_i=1$. Triangles denote the units $i$ in the unexposed population, for which $W_i=0$. 
The different shapes emphasize that these populations are not necessarily statistically equivalent. 
The numbers (0 or 1) inside a shape shows the value of the potential outcome of a unit. 
The rectangle denoted "exposed" contains the if-exposed potential outcomes of the exposed population, i.e. the $Y_i(1)$ values, which are observable.
The rectangle denoted "counterfactual unexposed" contains the potential outcomes of the exposed population \textit{had they not been exposed} to the ad campaign, i.e.,  the $Y_i(0)$ values, which are unobservable counterfactuals. 
The difference in response rates of the exposed population ($E[ Y(1) | W=1]$) and that of the Counterfactual Unexposed population ($E[ Y(0) | W=1]$) is the Average Treatment Effect on the Treated ($ATT$), which is a valid definition of causal effect. A simple numerical example of this calculation is shown.
}
\label{fig-counterfactuals}
\end{figure}

Note that a specific consumer $i$ is either treated or not, so precisely one of the two values $Y_i(0)$ or $Y_i(1)$ is observed. This is frequently formalized as:

\begin{quote}
	{\bf Fundamental Problem of Causal Inference:} For any individual $i$, we can never directly observe both potential outcomes $Y_i(0)$ and $Y_i(1)$. 
\end{quote}

   In particular the outcome $Y_i(W_i)$ is observable, whereas the outcome $Y_i(1 - W_i)$ is a {\bf counterfactual}, unobservable outcome. Although we consider more nuanced situations later, for the purpose of this initial mathematical formulation we ignore issues around the timing of the treatment and response of different consumers. For now we can imagine an idealized scenario where for all consumers $i = 1,2,\ldots,n$, the $W_i, Y_i$ are observed simultaneously at a specific instant in time. 

\subsection{Causal Effects}

We can now define various quantities:

\begin{definition}\label{ice}\small
{\bf Individual Causal Effect} (ICE) for a consumer $i$:
$$
ICE_i = Y_i(1) - Y_i(0).
$$
\end{definition}

From the Fundamental Problem, it follows that it is impossible to directly compute the ICE (although as mentioned in the Related Work Section \ref{sec-related}, there are recent machine learning based approaches to estimate the ICE as a function of features, or "covariates", of an individual unit). A more modest goal would be to estimate the \textit{average} causal effect over a group of consumers. In order to properly define an average, we need to assume some distribution over userIDs $i$, and we will assume the simplest one:

\begin{assumption}\small \textbf{Uniform Distribution
\footnote{The assumption of a uniform distribution is not restrictive at all: when there is a group of individuals being studied, and we want to express averages over this group as expectations, the uniform distribution is a straightforward device that provides the simplest way of doing this. The intention is not to specify that individuals occur "in reality" with this distribution.}
over UserIDs}
\label{ass-unif}
In all probability and expectation computations, we assume a uniform distribution over userIDs $i$, i.e. all of them occur with equal probability.
\end{assumption}

Now we can define what we mean by the average causal effect:

\begin{definition}\label{ace}\small
{\bf Average Causal Effect} (ACE):

$$
ACE = E_i[ ICE_i] = E_i [Y_i(1) - Y_i(0) ],
$$
\end{definition}
where the expectation $E_i[.]$ is taken over the distribution of consumers $i$. In other words, the ACE is the population-level average response rate if all consumers had been exposed to the ad, minus the response rate if none were exposed. We sometimes suppress the consumer subscript $i$ and simply write 
$$
ACE = E[ Y(1) - Y(0) ] = E [Y(1)] \; - \; E [Y(0)]
$$

From the previous discussion it will be evident that the $ACE$ is not exactly what we are after: it represents the incrementality across {\em all} consumers, exposed and unexposed. However in Question \ref{q-incr} we are interested in the incrementality of {\em exposed} consumers only. We can define this by simply conditioning the $ACE$ on $W=1$, and this leads to:

\begin{definition}\label{att}\small
{\bf Average Treatment Effect on the Treated} (ATT):

\begin{eqnarray}
ATT & = & E_i[ ICE_i | W_i = 1] \\
    & = & E_i [Y_i(1) - Y_i(0) | W_i = 1] 	\\
	& = & E[ Y(1) - Y(0) | W=1] \\
	& = & E[ Y(1) | W=1] - E[ Y(0) | W=1]
\end{eqnarray}

\end{definition}

Under what conditions can we estimate the $ATT$? In the last expression in the definition above, the first term $E[ Y(1) | W=1]$ is the observable average response rate of exposed consumers, while the second one $E[ Y(0) | W=1]$ is a counterfactual: we cannot directly observe the unexposed potential response rate of exposed consumers. What if we try to use the observed response rate of {\em unexposed} consumers $E[Y(0) | W=0]$ in place of the counterfactual second term? In other words we can try to estimate the $ATT$ by using the so-called {\em standard estimator:} 
$$
S = E[ Y(1) | W = 1]  \; -  \; E[ Y(0) | W = 0 ],
$$
which is the difference between the average response rates of treated and untreated consumers, both of which are observable. In general, however, $S$ would not be equal to the ATT: The ATT measures {\em causation} whereas S merely measures {\em association} \cite{elwert2013graphical}. To see why $S$ may not always estimate $ATT$ correctly, consider the following decomposition of $S$:

\begin{eqnarray*}
S &= & E[ Y(1) | W = 1]  \; -  \; E[ Y(0) | W = 0 ] \\
  &= & \underbrace{E[ Y(1) | W = 1]  \; -  \; E[ Y(0) | W = 1 ]}_{ATT} \; \; + \\ 
  &  & \underbrace{E[ Y(0) | W = 1]  \; -  \; E[ Y(0) | W = 0 ]}_{\text{selection bias}}  
\end{eqnarray*}

In other words, the difference in observed response rates between exposed and unexposed consumers is the sum of the $ATT$ and the selection bias. 
When there is a large selection bias, $S$ is not a good estimate of $ATT$ because the average observed response rate of unexposed consumers is a poor substitute for the counterfactual $E[ Y(0)|W=1 ]$. An example is shown in Fig. \ref{fig-att-bias}.

\begin{figure}\centering
\includegraphics[width=0.9\textwidth]{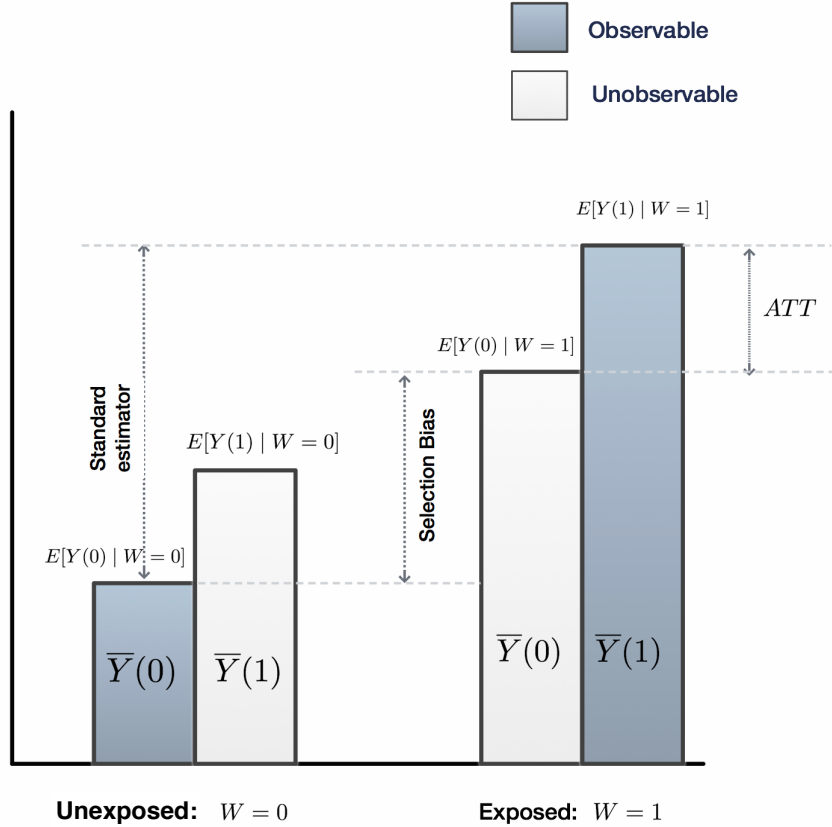}
\caption{\small Selection bias and $ATT$. The height of each bar represents the expected value of a potential outcome $Y(0)$ or $Y(1)$ for a population.
 The left pair of bars correspond to the unexposed population, $W=0$, and the right pair of bars correspond to the exposed population, $W=1$. 
 The Standard Estimator is a measure one would naively use to attempt to measure causal effect, i.e. the difference between the two observable average outcomes. 
 This figure shows pictorially how this naive measure can be decomposed into a combination of Selection Bias and the true causal effect, or $ATT$.
 A non-zero Selection Bias indicates that the baseline ("if unexposed") response rates of the unexposed population and exposed population are different, as shown by the difference in height of the left bars in each pair.}\label{fig-att-bias}
\end{figure}

We saw above that when the selection bias is zero, then the standard estimator $S$ equals the $ATT$. Absence of selection bias is a special case of a more general condition under which we can estimate $ATT$ (which involves a counterfactual) from observable expectations:

\begin{definition}\small [Ignorability of Exposure] \label{def-ignor}
	When $(Y(0), Y(1) ) \indep W$, i.e. when the potential outcomes $Y(0), Y(1)$ are jointly independent of exposure $W$, we say that exposure is {\bf ignorable} with respect to potential outcomes. For brevity we will simply say "exposure is ignorable" but it should be understood that we mean ignorability with respect to potential outcomes.
\end{definition}

It is important to realize that ignorability of exposure does not imply that the \textit{actual} outcome $Y_i$ of an individual $i$ is independent of exposure $W_i$; 
in general the actual outcome \textit{will} depend on exposure-status. Ignorability of exposure is a statement about \textit{potential} outcomes: 
the \textit{distribution} of potential outcomes is independent of exposure, i.e. the overall distribution of potential outcomes $(Y(0), Y(1))$ across the population is identical to their distribution in the exposed sub-population (where $W=1$) and in the unexposed sub-population (where $W=0$). The following property directly follows from this:

\begin{proposition}[Ignorability and Expectations]\label{prop-ignor}\small
When ignorability holds, the expectations of $Y(0)$ and $Y(1)$ are not affected by conditioning on exposure:
\begin{eqnarray} 
  E[Y(1) | W=1] = E[ Y(1) | W = 0] = E[Y(1)],\\
  E[Y(0) | W=1] = E[ Y(0) | W = 0] = E[Y(0)].  \label{eq-ignor}
\end{eqnarray}

\end{proposition}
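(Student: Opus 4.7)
The plan is to derive the proposition directly from the definition of ignorability as joint independence, unpacking it into marginal independence and then translating that into a statement about conditional expectations. Since the conclusion is purely about one-dimensional marginals of $Y(0)$ and $Y(1)$ against $W$, almost all the work is bookkeeping; the ``hard part,'' such as it is, is being explicit that joint independence of the pair $(Y(0),Y(1))$ with $W$ is strictly stronger than what we need and implies what we want for each coordinate separately.

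First I would observe that by Definition \ref{def-ignor}, ignorability asserts $(Y(0),Y(1))\indep W$. Applying this to measurable rectangles of the form $\mathbb{R}\times B$ and $A\times\mathbb{R}$, joint independence of the pair from $W$ implies marginal independence of each potential outcome from $W$, i.e.\ $Y(1)\indep W$ and $Y(0)\indep W$. I would state this explicitly because the definition is phrased in terms of the joint, whereas the conclusion refers to each potential outcome individually.

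Next I would invoke the standard fact that if a random variable $X$ is independent of a (discrete) random variable $W$, then for every $w$ in the support of $W$, $E[X\mid W=w]=E[X]$. Under Assumption \ref{ass-unif}, $W$ is discrete (binary), so conditioning is unambiguous whenever $P(W=w)>0$, and I would assume implicitly (as is standard and as the paper tacitly does) that both $P(W=1)>0$ and $P(W=0)>0$, since otherwise the conditional expectations in the statement are not defined. Applying this fact to $X=Y(1)$ yields
\begin{equation*}
E[Y(1)\mid W=1]\;=\;E[Y(1)\mid W=0]\;=\;E[Y(1)],
\end{equation*}
and applying it to $X=Y(0)$ gives the analogous chain of equalities for $Y(0)$, which is exactly \eqref{eq-ignor}.

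Finally, I would note briefly in the write-up that this proposition is precisely the technical content that justifies using the observed response rate among unexposed consumers as a substitute for the counterfactual $E[Y(0)\mid W=1]$ in the standard estimator $S$: under ignorability the selection-bias term in the decomposition of $S$ given earlier vanishes, so $S=ATT$. This is not part of the proof, but it closes the loop with the preceding discussion and motivates why the proposition is worth stating as a standalone result.
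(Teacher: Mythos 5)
Your proof is correct and follows essentially the same route the paper takes: the paper states the proposition as following ``directly'' from the definition of ignorability, and your argument simply makes explicit the two standard steps (joint independence implies marginal independence of each $Y(w)$ from $W$, and independence makes conditional expectations equal the unconditional ones). The added caveat about $P(W=w)>0$ is a reasonable, harmless refinement.
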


The ignorability property is sometimes referred to as {\em exchangeability} to highlight the fact that the treatment status $W=1$ and $W=0$ can be interchanged when taking conditional expectations with respect to values of $W$. 
Thus ignorability makes precise our informal notion of {\em statistically equivalent populations} that we alluded to in Idea \ref{idea-stat} earlier. Ignorability has an important consequence for our purposes:

\begin{lemma} \label{lem-ignor-att}\small

\small When ignorability of exposure holds, the $ATT$ can be estimated from observable averages as 
$$
E[Y(1) | W = 1] - E[ Y(0) | W = 0]
$$
\begin{proof}\small
Recall that the $ATT = E[Y(1) | W=1] - E[Y(0) | W=1]$, where the second terms is a non-observable (counterfactual) expectation, and Eq. \ref{eq-ignor} implies it can be replaced by $E[Y(0) | W=0]$, and the result follows.
\end{proof}
\end{lemma}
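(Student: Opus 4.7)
The plan is to start from the definition of $ATT$ given in Definition \ref{att}, namely
$ATT = E[Y(1) \mid W=1] - E[Y(0) \mid W=1]$, and then substitute a different expression for the counterfactual second term. The first term is already observable, since it is the average response rate among consumers who actually were exposed. The second term, however, is the counterfactual average response rate of the exposed population \emph{had they not been exposed}, which cannot be observed directly. The whole proof reduces to showing that, under ignorability, this counterfactual quantity equals something we can measure.

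To make that substitution, I would invoke Proposition \ref{prop-ignor}, and specifically the second equality in Eq. \ref{eq-ignor}, which states that $E[Y(0) \mid W = 1] = E[Y(0) \mid W = 0]$ whenever $(Y(0), Y(1)) \indep W$. The reason this works is exactly the exchangeability interpretation mentioned right after Proposition \ref{prop-ignor}: under ignorability, the distribution of the baseline potential outcome $Y(0)$ does not depend on the exposure indicator $W$, so its conditional expectations under $W=0$ and $W=1$ coincide. Replacing the counterfactual term in the $ATT$ with this equal, observable quantity $E[Y(0) \mid W=0]$ yields
$ATT = E[Y(1) \mid W=1] - E[Y(0) \mid W=0]$,
which is the claimed expression.

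There is essentially no hard step here: once ignorability is assumed and Proposition \ref{prop-ignor} is available, the lemma is a one-line substitution. The only thing worth being careful about is conceptual rather than technical, namely to emphasize that ignorability is what licenses swapping the conditioning event $\{W=1\}$ for $\{W=0\}$ inside $E[Y(0) \mid \cdot]$; this is precisely the point where a causal assumption (and not merely an algebraic identity) enters the argument, and it is what distinguishes the present lemma from the earlier observation that the standard estimator $S$ generally differs from $ATT$ by a selection bias term.
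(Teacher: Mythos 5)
Your proposal is correct and follows exactly the same route as the paper's proof: start from the definition $ATT = E[Y(1)\mid W=1] - E[Y(0)\mid W=1]$ and use the second equality of Eq.~\ref{eq-ignor} (Proposition~\ref{prop-ignor}) to replace the counterfactual term with the observable $E[Y(0)\mid W=0]$. Your added remark about where the causal assumption enters is a nice conceptual clarification but does not change the argument.
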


\subsection{Ensuring Ignorability by Randomization}

We saw in the previous section that ignorability of exposure ensures that the $ATT$ can be measured from the observable average response rates of Test and Control groups. The easiest way to ensure ignorability is by randomization: when the treatment variable $W_i$ is assigned 0 or 1 randomly (not necessarily with equal probability), then clearly the potential outcomes $Y_i(0), Y_i(1)$ are independent of $W_i$. This fact motivates the first simple setup for measuring the $ATT$ for an ad campaign, which we will describe in Section \ref{sec-psa}, but first we will pause briefly and outline a basic picture of how a  digital ad-buying platform (i.e., a DSP) operates.

\section{Demand Side Platforms} \label{sec-rtb}

We provide here a simplified view of the mechanisms involved in the operation of a DSP, as pertains to buying and delivering digital advertising. We will specifically describe the buying mechanism of Real Time Bidding (RTB), but the general principles laid out here apply to any ad-delivery contexts where similar event-level logging (as described below) and data flows exist, and the resulting methodology can be readily applied to those contexts as well. 
Most free (and some paid) websites and mobile apps depend on advertising for revenue. 
When a consumer visits a publisher's website or mobile app, this becomes an {\emph opportunity} to display an ad to the consumer. The publisher sends this opportunity to one or more ad exchanges or SSPs (which we will simply refer to collectively as "exchanges").
The ad exchange then sends these opportunities, also known as bid opportunities or "bid requests", along with associated data about the request, to various DSPs, whose clients are advertisers (or their agencies) that compete to win the opportunity. 
The bid request can be viewed as a tuple containing several fields, and in our present context the only ones that matter are $(u, r, e)$ where:
\begin{itemize}
	\item $u$ is the {\bf userID}. For example, this could be a cookie ID from a web browser, or a mobile advertising ID (or "deviceID" in brief) from a mobile app. 
	\item $r$ is the {\bf bid request ID}, which uniquely identifies the bid request, and we can imagine that all information needed to serve the ad will be attached to the bid request ID.
	\item $e$ is the {\bf exchange ID} from which the bid request originated.
\end{itemize}

A DSP typically has hundreds or thousands of clients (advertisers) on whose behalf it submits bids throughout the day. 
However in response to a given bid request, only a subset of advertisers may be eligible to bid due to several factors such as: (a) advertisers' targeting requirements, i.e., the types of consumers they want to reach, the contexts in which they want to reach them, etc. (b) governing campaign criteria such as the available budget and desired frequency of exposure, or (c) publisher restrictions on which types of advertisers they will accept. 
The DSP therefore needs to conduct what we will refer to generically as a {\em matching/targeting} process to arrive at a short list of advertiser campaigns eligible to submit a bid. 

If the DSP determines there are one or more eligible advertiser campaigns, it must then determine the appropriate price to bid for the impression in question on behalf of the eligible campaign(s). Determination of the optimal bid price to submit is an interesting problem in its own right, but we ignore the details of the bid determination process here. 
For our present purpose the only relevant aspect of the result of the optimal bid price determination process is that it is either empty (for example if no ad campaigns are eligible) or it is a {\bf bid submission} tuple 
$(u,r,e,c,b)$ which contains the entries $u,r,e$ from the bid request, plus two additional entries $c,b$ where:
\begin{itemize}
	\item $c$ is the {\bf campaign ID} on behalf of which the bid is submitted. Note that the campaign ID implies a unique advertiser ID; each advertiser typically runs a collection of campaigns, each with its own unique ID. 
	\item $b$ is the {\bf bid amount}, denominated in some monetary currency.
\end{itemize}

We will treat the combination of matching/targeting and bid-determination as a black box called the {\em Bidder}, whose input is the bid request tuple 
$(u,r,e)$ and the output is either empty or a bid-submission tuple  
$(u,r,e,c,b)$ that is sent to the exchange $e$ from which the bid request was received. 

There would in general of course be several other entities (e.g., multiple DSPs) bidding for the same bid request $r$ from the exchange $e$, and the exchange conducts an auction to determine the winner. The auction process is a black box for our purposes, and the end result is that the DSP receives an {\bf auction-result} tuple $(u,r,e,c,a,x)$ whose entries $u,r,e,c$ are from the bid-submission tuple, and 
\begin{itemize}
	\item $x$ is the {\bf auction outcome}, a binary variable indicating whether the bid submission won the auction ($x= \text{Win})$ or not ($x=\text{Lose}$).
	\item $a$ is the {\bf clearing price}, i.e. actual amount the DSP must pay the exchange (this is typically lower than the bid amount $b$ as most RTB auctions are some form of modified second-price auction). The specifics of the relation between $a$ and $b$ (i.e., the auction clearing mechanism) are not relevant to the measurement of causal ad-effectiveness which concerns us here.
\end{itemize}

Finally, if the DSP won the auction, it will serve an ad from campaign $c$ to the consumer represented by identifier $u$ using the session information associated with the bid request $r$, and log a record of the win. If it lost the auction, there is no action taken, other than possibly logging a record of the loss.

\subsection{Logging}

In order to be able to optimize the operation of all aspects of a DSP (including determination of optimal bid prices, implementing targeting, etc.), it is essential that the system conducts {\em logging} of not only of bidding events but also of consumer behaviors, via suitably placed tracking mechanisms (also known as "beacons" or "pixels") on advertiser websites, mobile apps, and other digital properties. Machine learning algorithms can make use of such logs to improve the efficacy of the DSP algorithms over time. It turns out that logging is also crucial for measuring the causal effect of ads. 

We assume the existence of 3 logs:
\begin{itemize}
	\item {\bf Bid Opportunity Logs}: When the DSP receives a bid opportunity, a log is created containing the tuple $BidOpp(u, t, r, e)$, where $u$ is the userID, $t$ is the time-stamp, $r$ is the bid request ID, and $e$ is the exchange ID. In general of course the number of bid opportunities seen daily can easily run into the hundreds of billions, and could be prohibitively large to store fully, so we assume here only the ability to ingest the stream of bid opportunities and store what we need.
	\item {\bf Impression logs}: When the DSP serves an ad for a campaign $c$ at time $t$ in response to a bid request (if it won the auction) for userID $u$, a log is created containing the tuple $Imp(u,t,c)$. Typically impression logs can contain a variety of other variables (especially relevant to ML algorithms) but those are ignored in the present discussion.
	\item {\bf Event logs}: When a userID $u$ performs a conversion action (e.g., a purchase on a suitably instrumented mobile app or web-page) relevant to a specific campaign $c$ at time $t$, a log is created containing the tuple $Ev(u, t, c)$. Essentially, we are logging the fact that userID $u$ had a "response" relevant to campaign $c$, and crucially, $u$ may not actually have been exposed to an ad from campaign $c$. Here we only consider actions relevant to defined campaign goals, and ignore other events.
\end{itemize}

These logs can be augmented to help with causal effect estimation, as we will see later.

\subsection{Campaign Post-View Windows}\label{sec-pv}

One aspect of consumer "responses" we have glossed over, is the timing of response. Returning to the potential outcomes framework presented earlier, suppose we start observing a userID $u_i$ at time $t = t_0$. At this time the consumer is either exposed to an ad (from the campaign $c$ under consideration) or not, i.e. $W_i = 1$ or $W_i=0$. How do we define the response variable $Y_i(W_i)$? To properly define the response variable we introduce the notion of a "window of influence" $V_c$ of a campaign $c$. In theory, such a window isn't strictly necessary, as the response of consumers can simply be expected to diminish in some rapid but smooth (e.g., perhaps exponential) function of time. In practice however, most advertisers impose a hard cutoff such that they are willing to ascribe impact to an ad exposure for times $t < V_c$ but unwilling to do so for times $t > V_c$. The specific cutoff value of $V_c$ varies by advertiser, but is assumed to be sufficiently long as to capture substantially all of the impact. Thus, the advertiser is interested in measuring the causal effect of their ad up to $V_c$ time units post-exposure. We call $V_c$ the {\bf post-view window} (PV window for short) of campaign $c$. Roughly speaking, if we log an impression $I = Imp(u,t,c)$, and subsequently log an event $E = Ev(u,t',c)$, then we can validly attribute the event $E$ to impression $I$ if $t' - t \leq V_c$. We will discuss more carefully later how to use the PV window in various scenarios, in the context of measuring response rates of exposed and unexposed consumers.

\section{The Simplest Randomization: Post-Bid} \label{sec-psa}

With the above description of how a DSP operates, we are ready to specify the simplest possible design of a randomized test system to determine the causal effect of an ad campaign $c$, i.e., the $ATT$. 
The basic idea is simple (see Figure \ref{fig-post-bid}):

\begin{figure}\centering
\includegraphics[width = 0.6\textwidth]{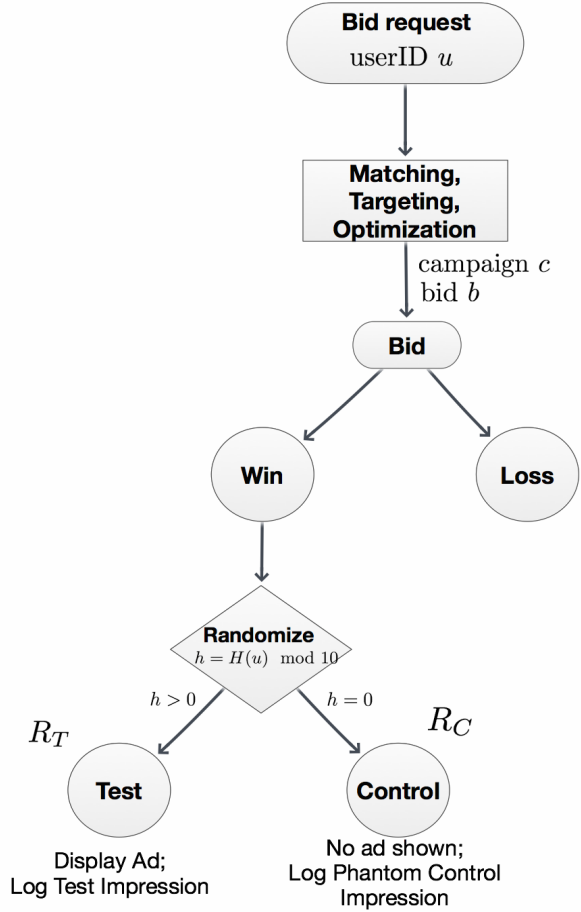}
\caption{\small Populations and response rates involved in the post-bid randomization scheme.}
\label{fig-post-bid}
\end{figure}

\begin{definition}\small
{\bf Post-Bid Randomization:} Fix a {\em holdout fraction} $p < 1$. 
A "random" (in the sense below) fraction $p$ of userIDs who are {\em about to see an ad} for campaign $c$ (i.e., where the DSP has won the auction), are instead shown a PSA (Public Service Announcement) ad\footnote{Alternatively, these consumers could be shown no ad at all, which would leave the methodology and results here unchanged, but in practice once a buyer wins an auction they are compelled by publisher to produce some ad, lest the page render a blank box which would constitute a poor user experience.} that is unrelated to campaign $c$ (and assumed to have no impact on the campaign's desired outcome). The $(1-p)$ fraction who are exposed to the campaign are called the Test group, and the fraction $p$ who receive PSA ads are called the Control group.
\end{definition}

It is important to realize that the Test/Control assignment is assumed for now to be a function of the userID only, i.e., a userID is either a Test or a Control userID. So if the same userID appears multiple times in different bid requests, it will either always be in the Test group or always be in the Control group. 

The Test/Control assignment can be accomplished by picking a suitable {\em deterministic} uniform hash function $h_k(u)$ that maps a userID $u$ to a $k$-digit decimal number, and determine Test/Control assignment based on the last few digits. The hash function is uniform in the sense that if we pick a random userID $u$, then $h_k(u)$ is uniformly distributed over the range $0$ to $10^k-1$. For example, if the holdout fraction $p = 10\%$,  then we can map any userID $u$ that has $h_k(u)$ ending in 0 to Control, and otherwise to Test. Since the hash function values are uniformly distributed, and the last digit of $h_k(u)$ is equally likely to be any of the 10 decimal digits, we will have mapped roughly 10\% of userIDs to Control. Note that we are not actually doing any explicit randomization. Instead, we are only assuming that the userIDs are independent of their potential outcomes (and hence the Individual Causal Effect, ICE), which is a very reasonable assumption. Since Test/Control assignment depends only on the userID via the hash function, it then follows that the Test/Control assignment is ignorable when estimating causal effects. In this sense, we can think of Test/Control assignment as being essentially random over userIDs.  

For future reference, we summarize this as pseudo-code in Procedure \ref{alg-post-bid} below. Here $c_0$ denotes the specific campaign for which we intend to estimate the causal effect $ATT$. To simplify the presentation we will assume existence of a Test/Control assignment function 
$H(u,p) \in \{C,T\}$ (where $C$ represents Control, $T$ represents Test)
such that if the userID $u$ is chosen at random, then $P_u[ H(u,p) = C]= p$. Such a function can easily be implemented by means of a uniform hash function $h_k(.)$ as described above.

\begin{algorithm} 
 \caption{\small Post-Bid Randomization to estimate causal effect}
    \label{alg-post-bid}
\begin{algorithmic}[1]
\Require 
$c_0$: CampaignID being measured, 
$p$: Holdout (control) Fraction,
$H(u,p)$ = test/control assignment function

\While {True} \Comment{{\em endless loop of bidder}}
\State {$(u,r,e) \gets $ Bid Request from exchange $e$;}
\State {$(u,r,e,c,b) \gets $ (Bidder match/targeting/bid-Optimization);}
\State {submit bid $(u,r,e,c,b)$ to exchange $e$;}
\State {$(u,r,e,c,a,x) \gets $ AuctionResult from exchange $e$;}
\State{$t \gets$ current time;}
\State{$g \gets H(u,p)$;} \Comment{$g \in (T,C)$}
\If {$x == \text{Lose}$} 
	\State{do nothing}
\ElsIf{$c == c_0$ and  $g == C$} 
	\State{serve PSA ad to bid request $r$;}
    \State{log a "phantom control impression" $Imp(t, u, c_0, C)$} \label{phantom}
\Else    
	\State{serve campaign $c$ ad to bid request $r$;}
	\If{group $ == T$} 
		\State{log a test impression $Imp(t, u, c_0, T)$}
	\EndIf
\EndIf
\EndWhile
\end{algorithmic}	
\end{algorithm}

Note that when a "phantom control impression" $Imp(t, u, c_0, C)$ is logged (line \ref{phantom}) we are recording the fact that at instant $t$, the userID $u$ was {\em about to see an ad} from campaign $c_0$, but in the "last millisecond" we decided instead to show the userID a PSA ad instead, because $H(u,p) = C$. Since the Test/Control assignment depends only on the userID $u$, and the potential outcomes of $u$ do not depend on $u$, it follows that Test/Control assignment is ignorable when computing the conditional expectations $E[Y(1) | W=1]$ and $E[Y(0) | W=1]$. Lemma \ref{lem-ignor-att} then implies that we can estimate the $ATT$, and we describe this next.

\subsection{Causal Effect Measurement}\label{sub-causal-post-bid}

For simplicity we assume the following for now:
\begin{assumption}\small \label{assum-user-once}
	No userID appears more than once in any of the impression logs during this period. In Sec. \ref{sec-repeat} we will describe a methodology to handle recurring userIDs.
\end{assumption}

In addition, we continue to assume a uniform distribution over userIDs (Assumption \ref{ass-unif}).

To estimate the causal effect ($ATT$) we need to estimate the response rates of the Test group ($W=1$) and Control group ($W=0$), i.e., 
$R_T = E[Y(1) | W=1]$ and $R_C = E[Y(0)|W=0]$. 
See Figure \ref{fig-post-bid} which schematically shows the main populations and response rates involved in the $ATT$ computation.
Suppose we wish to estimate these after having run a post-bid randomized test over some period. From the impression logs we know the exact time when each Test userID was exposed to an ad from campaign $c$, and 
when each Control userID was exposed to a PSA ad. Now from the event logs, we can tell which of these exposures lead to a response (i.e. $Y=1$), where we consider an event $Ev(u,c,t')$ to be a response to a Test impression $Imp(u,c,t,T)$ if $t'-t \leq V_c$, where $V_c$ is the PV window of the campaign $c$. An analogous criterion applies to Control impressions. Since we have assumed that each userID appears exactly once, we do not need to be concerned with a userID having multiple ad exposures, with possible overlap of PV windows (which would complicate assignment of "credit" for responses). This results in 4 numbers: 
\begin{itemize}
	\item $n_t, n_c$ the number of Test and Control impressions respectively
	\item $k_t, k_c$ the number of Test and Control responses respectively,
\end{itemize}

which gives estimates of the Test and Control group response rates as $R_T = E[Y(1)\,|\,W=1] = k_t/n_t$ and $R_C = E[Y(0)\,|\,W=0] = k_c/n_c$ respectively, and so we estimate $ATT = R_T - R_C =  k_t/n_t - k_c/n_c$. 
(see Figure \ref{fig-post-bid}).

\subsection{Wasted Spend in Post-Bid Randomization}
While the post-bid randomization idea is simple to implement and leads to a statistically sound estimation of the causal effect, it has one big drawback: 
since the Test/Control assignment is done {\em after} placing a bid, the advertiser must pay for the won impression, regardless of whether the consumer is served an actual campaign ad, or a PSA ad. 
Thus if the holdout fraction is 10\%, the advertiser essentially wastes 10\% of their ad budget on PSA ads, which in turn hurts campaign ROI by a corresponding amount. 
Moreover, given the low absolute value of typical response rates, larger holdout fractions may sometimes be required to improve measurement significance. 
Considering that annual digital advertising budgets for large advertisers can easily run into the 10s or even 100s of millions of dollars, this is a significant problem. 
Motivated by this consideration, we have designed a {\em pre-bid randomization} scheme, and a methodology to estimate $ATT$ under that scheme, which does away with the need to spend money on bids won against the control group. We describe this in detail in the following sections.

\section{Pre-Bid Randomization} \label{sec-prebid}

The idea here is to assign a userID to Test or Control {\em before} submitting a bid to the exchange, and only submit bids for userIDs assigned to Test. We make this precise in Procedure \ref{alg:pre-bid}, also shown schematically in Fig. \ref{fig-prebid-flow}

\begin{algorithm} 
 \caption{\small Pre-Bid Randomization to Estimate Causal Effect}
    \label{alg:pre-bid}
\begin{algorithmic}[1]
\Require 
$c_0$: CampaignID being measured, 
$p$: Holdout (control) Fraction,
$H(u,p)$ = Test/Control assignment function
\While {True} \Comment{{\em endless loop of bidder}}
\State {$(u,r,e) \gets $ bid request from exchange $e$;}
\State {$(u,r,e,c,b) \gets $ (Bidder match/targeting/bid-optimization);}
\State{$t \gets$ current time;}
\State{$g \gets H(u,p)$;} \Comment{$g \in (T,C)$}

\If{$c == c_0$ and $g == C$}
	\State{do not submit bid;}
	\State{log a "phantom control impression" $Imp(t,u,c_0, C)$}
\Else \Comment{{\em non-tested campaign, or Test userID for $c_0$ }}	 
	\State{submit bid $(u,r,e,c,b)$ to exchange $e$}
	\State {$(u,r,e,c,a,x) \gets $ AuctionResult from exchange $e$}	
	\State{$t \gets$ current time;}
	\If {$x == \text{Win}$} 
		\State{serve campaign $c$ ad to bid request $r$;}	
		\If {$c == c_0$}
			\State{log a Test WIN impression $Imp(t,u,c_0,TW)$;}
		\EndIf
	\ElsIf {$c == c_0$}
		\State{log a "phantom" Test LOSS impression $Imp(t,u,c_0,TL)$}
	\EndIf
\EndIf			
\EndWhile
\end{algorithmic}	
\end{algorithm}

It is worth making a few remarks about the Pre-Bid Randomization procedure. In Post-Bid randomization, logging a Test or Control impression occurs only {\em after} winning the auction, so every userID logged as Test actually sees an ad. By contrast, in Pre-Bid Randomization, a fraction $p$ of userIDs are logged as having "phantom" Control impressions ($C$) {\em before} submitting a bid to the exchange, and a bid is submitted only if the userID is among the $1-p$ userIDs in the Test group. For those in the Test group, if the DSP wins the auction, this results in logging a Test-Win ($TW$) impression, and otherwise a (phantom) Test-Loss ($TL$) impression. 

The meanings of the various types of impressions logged are:
\begin{itemize}
	\item Phantom control impression $Imp(t, u, c_0, C)$ indicates that at time $t$, a bid was about to be submitted for userID $u$ (on behalf of campaign $c_0$), but was not submitted because $u$ is a Control userID, i.e. $H(u,p) = C$.
	\item Test-Win impression $Imp(t, u, c_0, TW)$ indicates that at time $t$ the DSP {\em won} a bid submitted to the exchange on behalf of campaign $c_0$, and the Test userID $u$ is {\em exposed} to a campaign $c_0$ ad.
	\item Phantom Test-Loss impression $Imp(t, u, c_0, TL)$ indicates that at time $t$ the DSP \textit{lost} a bid submitted to the exchange on behalf of campaign $c_0$, and the Test userID $u$ is {\em not exposed} to a campaign $c_0$ ad.
\end{itemize}

\subsection{Causal Effect Computation}\label{sec-causal-pre-bid}

Let us now consider how we might estimate the causal effect ($ATT$) from the impression and event logs over a certain period. To keep the description simple, we will for now continue to make the assumption that each userID appears exactly once in the bid requests (Assumption \ref{assum-user-once}), and that there is a uniform distribution over userIDs (Assumption \ref{ass-unif}). The methodology and analysis for the (very realistic) recurring-userID case will be presented in Section \ref{sec-repeat}.

It will be useful to define a shorthand to denote various \textbf{populations} of userIDs: $C, TW, TL$ denote the set of userIDs that appear in the 3 respective types of impression logs. We also write $T$ to denote the union $TW \cup TL$. The sets $T$ and $C$ are clearly disjoint, and since we assumed each userID appears at most once in the logs, $TL$ and $TW$ are disjoint as well.

Recall that the $ATT$ is $E[ Y(1) | W=1] - E[Y(0) | W=1]$, i.e., the average observed response rate of exposed consumers minus the average (counterfactual) unobservable non-exposure potential response rate of exposed consumers. 
Under Post-Bid Randomization, our reasoning in estimating the $ATT$ was the following: 
\begin{enumerate}

	\item Assignment to Test/Control is exactly the same as exposure/non-exposure: a userID is exposed to an ad {\em if and only if} it is a Test userID,
	\item $E[Y(1) \,|\, W=1]$ is observable, as the response rate of Test (i.e., exposed) userIDs.
	\item Assignment to Test/Control is purely based on the userID $u$ (via the hash function $h_k(u)$), and the userID has nothing to do with its potential outcomes,
	\item Therefore exposure is \textit{ignorable} with respect to potential outcomes (Definition \ref{def-ignor}), or informally, the exposed (i.e. Test) and unexposed (i.e. Control) populations are "statistically equivalent" (with regard to potential response outcomes), and 
	\item This in turn implies (using Proposition \ref{prop-ignor}) that the counterfactual expectation $E[Y(0)|W=1]$ equals the observable response rate of Control userIDs, $E[Y(0)|W=0]$. 
\end{enumerate}

Can we do the same with Pre-Bid Randomization? The exposed population is precisely the set $TW$, so their average response rate $R_{TW}$ would be an estimate for the first term in the $ATT$, i.e., $E[ Y(1) | W=1]$. How do we estimate the counterfactual $E[Y(0) | W=1]$, i.e., the average response rate exposed consumers {\em would have had}, if they had {\em not} been exposed to the ad? 

If we examine the statements in the reasoning outlined above for the Post-Bid randomization setup, we see that:
\begin{itemize}
	\item Statement 1 does \textit{not} hold for Pre-Bid randomization: when a userID is assigned to Test, that userID is exposed to an ad \textit{only} if the submitted bid is won by the DSP, so assigning a userID to Test is \textit{not} equivalent to exposing that userID to an ad.
	\item Statement 2 is partially true in the Pre-Bid randomization scenario: $E[Y(1) \,|\, W=1]$ is observable, though it is not the response rate of Test userIDs; it is the response rate of the \textit{Test-Winner} userIDs $TW$.
	\item Statement 3 \textit{does} hold, since Test/Control assignment is still done via the deterministic hash function as in Post-Bid randomization, and \textit{this implies that Test/Control assignment is ignorable with respect to potential outcomes,} or informally, the Test and Control populations are "statistically equivalent".
	\item However since Test/Control assignment is not equivalent to exposure/non-exposure, the ignorability of Test/Control assignment does not translate to ignorability of exposure, so statement 4 cannot be made, i.e., \textit{exposure is not ignorable}.
	\item Thus, we cannot make Statement 5, i.e., we cannot say that the counterfactual expectation $E[Y(0) \,|\, W=1]$ equals the observable $E[Y(0) \,|\,W=0]$.
\end{itemize}

Thus the fundamental difficulty in the Pre-Bid randomization approach is that the exposure $W$ is not ignorable, or informally, the populations of exposed userIDs ($TW$) and unexposed userIDs ($C \cup TL$) are not "statistically equivalent". Indeed, even though the overall Test population $T = TW \cup TL$  is statistically equivalent to $C$ (i.e., Test/Control assignment is ignorable), the set $TW$ is the subset of $T$ consisting of userIDs in bids won in the auction, which could well be selecting for consumers who have a significantly higher or lower response rate (or Individual Causal Effect, ICE) than that of the overall Test population. In other words, there could be a significant {\em win bias} in the auction.

In fact, consumers that a DSP bids on and wins are expected to be systematically different from those it bids on and loses. This is at least in part due to the fact that bids lost are typically lost because some other buyer bid higher in the auction, presumably on the basis of some information unknown to the DSP in question. This might suggest, for example, that consumers bid on and lost might be systematically "more attractive" in some sense (i.e., with higher response rates). Thus, if not properly accounted for, win bias could actually produce a {\em negative} lift measurement, as the most responsive userIDs within the Test population are suppressed. Other systematic differences may also exist; the key point is that in general it cannot be assumed that responsiveness of consumers involved in won vs. lost bids is statistically equivalent.

In the following sections we present a rigorous framework to analyze the Pre-Bid Randomization scheme, and this will lead to a statistically sound methodology to estimate the $ATT$.

\section{Non-compliance and Auction Win Bias} \label{sec-win-bias}

Since in the Pre-Bid randomization setup, Test/Control assignment is no longer equivalent to exposure/non-exposure, we need to introduce a new random variable $Z_i$ that indicates whether userID $u_i$ has been {\bf assigned} to Test ($Z_i=1$) or Control ($Z_i=0$). As before, $W_i = 1$ and $W_i = 0$ denote whether the userID has been \textbf{exposed} or not, respectively. Equivalently, $W_i=1$ when the userID $u_i$ is in a \textbf{won bid}, and we informally refer to this userID as a \textbf{winner}. Similarly $W_i=0$ when $u_i$ is in a \textbf{lost bid}, and the userID is referred to as a \textbf{loser}. In the Post-Bid Randomization setup, $Z_i$ and $W_i$ are identical, but in the Pre-Bid Randomization scheme, there are three possibilities:
\begin{itemize}
	\item $Z_i = 0, W_i = 0$: userID $u_i$ assigned to Control and not exposed -- the set of all $u_i$ satisfying this are exactly the Control userIDs $C$,
	\item $Z_i = 1, W_i = 0$: userID $u_i$ assigned to Test but not exposed --  the set of all $u_i$ satisfying this are exactly the Test-Loss userIDs $TL$,
	\item $Z_i = 1, W_i = 1$: userID $u_i$ assigned to Test and exposed --  the set of all $u_i$ satisfying this are exactly the Test-Win userIDs $TW$,
\end{itemize}

The existence of userIDs $u_i$ for which $Z_i \neq W_i$ in general is called {\bf non-compliance} in the causality literature 
(see, e.g., \cite{Imbens1997a}). This terminology originates in randomized clinical tests of drug effectiveness, where certain patients assigned to a take a drug do not comply with their assignment. In our case there is only one type of non-compliance, exhibited by the $TL$ population, namely $Z_i = 1, W_i=0$. 

\subsection{User Types} \label{sec-user-types}
For a given Test userID $u_i$ we have $Z_i=1$, and whether or not they are exposed to an ad depends on whether the bid submitted involving $u_i$ wins the auction. This will in general depend on a number of factors, and since each userID is assumed to appear exactly once in the logs (Assumption \ref{assum-user-once}), we can think of each userID as encapsulating all of the factors that could possibly influence the winning of the auction. Then we can say that there are only two \textit{types} of userIDs: those whose bids would win if they were assigned to Test, and those whose bids would lose if they were assigned to Test. Thus, for a given type of userID, the exposure variable $W_i$ depends only on the test-assignment variable $Z_i$: $W_i = W_i(Z_i)$. For the analysis it will be useful to introduce a random variable $U_i$ that represents the \textit{user type} of userID $u_i$, with 2 possible values:
\begin{itemize}
	\item $U_i=1$ represents a "winner-type": $W_i(0)=0$, and $W_i(1) = 1$, i.e., bids involving this userID would be won if the UserID were assigned to Test,
	\item $U_i=0$ represents a "loser-type": $W_i(0)=0$ and $W_i(1)=0$, i.e., bids involving this userID would be lost if the userID were assigned to Test.
\end{itemize}
A more general treatment of user types can be found in \cite{Rubin_Causal_2005} or \cite{Chickering1996} but the above suffices for our context.

\subsection{Potential Outcomes and Response Rates} \label{sec-response-rates}

We define the potential outcome random variable $Y_i(Z_i)$ as the response (0 or 1) of a userID $u_i$ whose Test/Control assignment is $Z_i$. This is different from the earlier definition $Y_i(W_i)$ where $Y_i$ was written as a function of the \textit{exposure} status $W_i$ (of course in Post-Bid Randomization, there is no difference between $Y_i(W_i)$ and $Y_i(Z_i)$). 
From the above definition of user types it follows that once we fix the user type $U_i$, the exposure-status $W_i$ is determined only by the treatment-assignment variable $Z_i$. In subsequent analysis, it will be helpful to use the following intuitive shorthand notation to denote various populations, probabilities, and expectations (recall that all expectations and probabilities are calculated with respect to the uniform distribution over userIDs $i$). We use the superscript $'$ to indicate counterfactual averages that are not directly observable.

\begin{figure} \small \centering
\includegraphics[width=0.8\textwidth,angle=270]{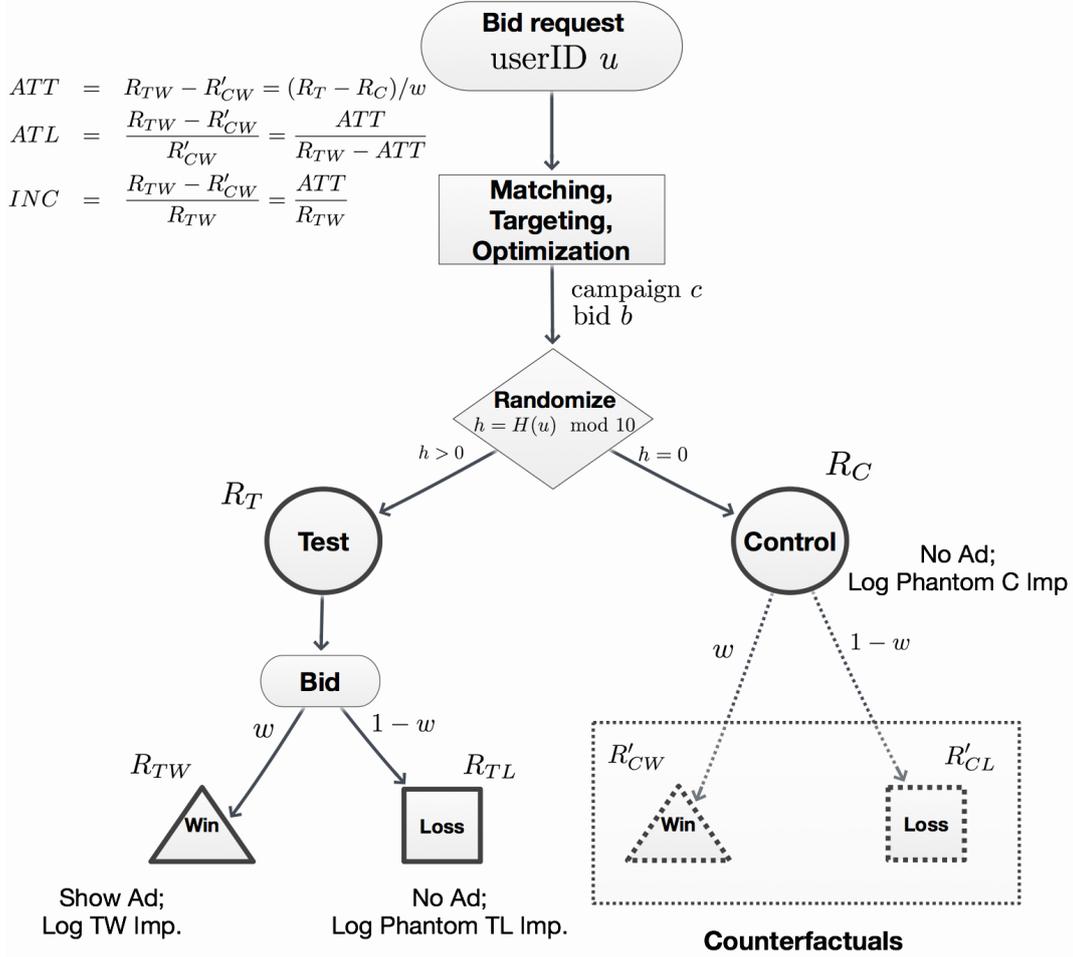}
\caption{\small Pre-bid randomization to estimate Causal Effect of ads. Test/Control assignment of a userID $u$ is based on the last digit of (the decimal representation of) $H(u)$ where $H$ is a uniform random hash function. 
The shapes (circles, squares, or triangles) labeled Test, Control, Win, Loss indicate different populations of userIDs, and are annotated with their corresponding response rates. 
Two populations are represented by the same shape when they are statistically equivalent, i.e., their response rates would be the same if their exposure status is the same. In particular, triangles correspond to "winner types" while squares correspond to "loser types." 
For example, the response rates $R_{TL}$ and $R'_{CL}$ are the same because they correspond to statistically equivalent (loser-type) populations, and neither is exposed to ads.
The solid shapes denote observable populations (and their response rates), while the two dotted shapes denote unobservable counterfactuals, whose response rates are $R'_{CW}$ and $R'_{CL}$. The win rate (i.e., the fraction of the Test population for which the submitted bid wins) is denoted $w$, and since the Test and Control populations are statistically equivalent, $w$ is also the expected fraction that \textit{would result in a win} if bids were submitted for the Control population. At the top left we summarize the main equations for lift computation under pre-bid randomization.
}
\label{fig-prebid-flow}
\end{figure}

\begin{itemize}
	\item $R_T = E[Y(1)]$, the overall observable average response rate of userIDs assigned to Test, regardless of their exposure status.
	\item $R_C = E[Y(0)]$, the overall observable average response rate of userIDs assigned to Control (none of these are exposed to ads).
	\item $R_{TW} = E[Y(1) | U=1]$, the observable average response rate of winner-type userIDs assigned to Test.
	\item $R_{TL} = E[Y(1) | U=0]$, the observable average response rate of loser-type userIDs assigned to Test.
	\item $R'_{CW} = E[Y(0) | U=1]$, the counterfactual average response rate of winner-type userIDs assigned to Control.
	\item $R'_{CL} = E[Y(0) | U=0]$, the counterfactual average response rate of loser-type userIDs assigned to Control.
	\item $w = P(U=1)$, the \textbf{win rate}, or the probability of a userID being a winner type. We show later how this can be computed from observable data.
\end{itemize}

Figure \ref{fig-prebid-flow} shows the various populations involved in the pre-bid randomization scheme.

\subsection{Causal Effect Definitions and Estimation}\label{sec-causal-est}

In this context, the earlier definitions of the $ICE$ (Individual Causal Effect), $ACE$ (Average Causal Effect) and $ATT$ are similar, except that we use the treatment-assignment variable $Z_i$ in place of $W_i$ as the argument of the $Y_i(.)$ and we use $U_i$ in place of $W_i$ as the conditioning variable in the conditional expectations.

\begin{definition}\label{def-ice-noncomp}\small
	\textbf{Individual Causal Effect ($ICE$) Under Non-Compliance:} The potential outcome of a userID $u_i$ if it were assigned to Test, minus the potential outcome if it were assigned to Control:
\begin{equation}
ICE_i = Y_i(1) - Y_i(0) \label{eq-ice-noncomp}	
\end{equation}
\end{definition}

Note that for a loser-type userID $i$ (i.e., $U_i = 0$), there is no exposure to the ad in either Test or Control, so $ICE_i = 0$.

\begin{definition}\label{def-ace-noncomp}\small
	\textbf{Average Causal Effect ($ACE$) Under Non-Compliance:}
\begin{equation}
ACE = E_i[ICE_i] = E[ Y(1) - Y(0) ] = R_T - R_C \label{eq-ace-noncomp}	
\end{equation}
\end{definition}

Note that $R_T = E[ Y(1)]$ is the overall average response rate of the Test group, including winner types ("compliers") and loser types ("non-compliers"), and $R_C = E[Y(0)]$ is the overall average response rate of the Control group. For this reason the $ACE$ under non-compliance is also called the \textbf{Intent-To-Treat} ($ITT$) effect: it represents the average response-rate differential due to the \textit{intention} to expose the Test group to ads, regardless of who in the Test group is actually exposed. Since both response rates are observable, it is straightforward to obtain an unbiased point estimate of the $ACE$.

\begin{definition}\label{def-att-noncomp}\small
	\textbf{Average Treatment Effect on Treated ($ATT$) Under Non-Compliance:}
\begin{align}
ATT & =  E_i[ ICE_i | U_i = 1]   \nonumber \\
    & =  E_i [Y_i(1) - Y_i(0) | U_i = 1] 	\nonumber \\
	& =  R_{TW} - R'_{CW}  \label{eq-att-noncomp}	
\end{align}

\end{definition}

This definition of $ATT$ conditions on exactly the right population: the winner types, with $U=1$. This is precisely the population that \textit{would} be exposed to ads \textit{if} they were assigned to Test, and the average response rate would be $R_{TW} = E[Y(1)| U=1]$. Similarly this population would \textit{not} be exposed if assigned to Control, and the average response rate would be $R'_{CW} = E[Y(0)|U=1]$. There is, however, a problem: the user type $U$ is \textit{not observable} for Control userIDs, so $R'_{CW}$ is not directly observable! Nevertheless, we can compute the $ATT$, thanks to a series of known properties and results (see, e.g., \cite{Little_Causal_2000}):

We begin with a straightforward property:
\begin{proposition} \label{prop-rtl-rcl}\small
Under Pre-Bid Randomization, the average response rate of loser-type userIDs assigned to Test and Control are the same, i.e. $E[Y(1) | U=0] = E[Y(0) | U=0]$, or $R_{TL} = R'_{CL}$.
\begin{proof}
This follows from the fact that (a) loser-type userIDs are not exposed to ads, whether they are assigned to Test or Control, and (b) Test/Control assignment is purely random and hence ignorable, i.e., it has no influence on potential outcomes under non-exposure.
\end{proof}

\end{proposition}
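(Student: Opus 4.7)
The plan is to combine the intrinsic characterization of a loser-type with the deterministic-hash ignorability of the Test/Control assignment. By the definition in Section \ref{sec-user-types}, any userID $u_i$ with $U_i = 0$ satisfies $W_i(0) = W_i(1) = 0$: such a userID is never exposed to a campaign-$c_0$ ad regardless of whether it is assigned to Test or Control. So the claim is really that ``no exposure under either assignment'' forces equal potential outcomes under either assignment.

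First I would invoke the (implicit) consistency/exclusion assumption that the response $Y_i$ is a function of the assignment $Z_i$ only through the realized exposure $W_i$. Intuitively, if a loser-type is assigned to Test, the DSP submits a bid, loses, and the userID experiences exactly the same ad environment as it would have under a Control assignment (where no bid is submitted at all). Applied pointwise to every $i$ with $U_i = 0$, this gives $Y_i(0) = Y_i(1)$. Taking the conditional expectation of both sides given $U = 0$, under the uniform distribution over userIDs (Assumption \ref{ass-unif}), yields
\[
E[Y(1) \mid U = 0] \;=\; E[Y(0) \mid U = 0],
\]
which is precisely $R_{TL} = R'_{CL}$.

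The main obstacle is notational rather than mathematical: in Section \ref{sec-response-rates} the arguments of $Y(\cdot)$ are the assignment status $Z$, not the exposure $W$, so the identity $Y_i(0) = Y_i(1)$ is not tautological for loser-types. One has to justify the step ``equal exposure under both assignments implies equal response under both assignments,'' which is essentially an exclusion restriction in the instrumental-variables sense: the assignment label $Z$ has no channel to affect the response other than via $W$. Ignorability of $Z$ (guaranteed by the deterministic uniform hash $h_k(u)$ being a function only of the userID, which is itself independent of the intrinsic properties $(U_i, Y_i(0), Y_i(1))$) is then used only to legitimize the interpretation of $R'_{CL}$ as the expectation over the same loser-type sub-population one would recover from the Test side; it is not needed to establish the pointwise equality above.
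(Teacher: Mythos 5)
Your proof is correct and follows essentially the same route as the paper's: the paper's point (a) is your observation that $U_i=0$ forces $W_i(0)=W_i(1)=0$ together with the (implicit) exclusion restriction giving $Y_i(0)=Y_i(1)$ pointwise, and the paper's point (b) is the ignorability you invoke to identify the defined expectations with the observed sub-populations. Your remark that ignorability is not needed for the pointwise equality itself, only for interpreting $R_{TL}$ and $R'_{CL}$ as recoverable from the $Z=1$ and $Z=0$ sides, is a correct and slightly sharper accounting of where each assumption enters than the paper gives.
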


\begin{lemma}\small
In the Pre-Bid Randomization setting, an unbiased estimate of the probability that a userID is a winner type, or the win rate $w = P(U=1)$, is given by the fraction of Test userIDs that are exposed. 
\begin{proof}\small
Note that $P(U=1) = E(U)$, the average value of the user type $U$ across the population. Since Test/Control assignment is purely random, and hence ignorable, we can write $E(U) = E(U | Z=1)$. But in the Test population, the userIDs with $U=1$ are precisely those with $W=1$, so $E(U|Z=1) = E(W|Z=1)$, and an unbiased estimator of $E(W|Z=1)$ is the fraction of Test userIDs that are exposed.
\end{proof}
\end{lemma}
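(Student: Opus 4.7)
The plan is to reduce $w = P(U=1)$ to a conditional expectation that is directly observable from the Test-group exposure data, via three short steps: a trivial reformulation, an application of ignorability of Test/Control assignment, and the identification of user type with exposure inside the Test group.

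First I would note that, since $U$ is $\{0,1\}$-valued, $w = P(U=1) = E[U]$, so the task reduces to producing an unbiased estimator of $E[U]$. Next, I would invoke ignorability: in the Pre-Bid Randomization scheme $Z$ is computed from a deterministic uniform hash $h_k(u)$ of the userID, which (together with the "uniform distribution over userIDs" assumption) makes $Z$ independent of the user-type $U$ and of the potential outcomes. This independence gives $E[U] = E[U \mid Z = 1]$, replacing an unconditional population average with a conditional average over the (observable) Test group.

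The crucial step is to argue that, conditional on $Z=1$, the user type $U$ coincides with the exposure indicator $W$. By the definition of user type in Sec.~\ref{sec-user-types}, a winner-type userID ($U=1$) satisfies $W(1)=1$ and a loser-type userID ($U=0$) satisfies $W(1)=0$, so on the event $\{Z=1\}$ we have $W = W(1) = U$ identically. Therefore $E[U \mid Z=1] = E[W \mid Z=1]$. The right-hand side is the population fraction of Test userIDs that are exposed, whose natural sample-average estimator—the observed fraction of Test userIDs in $TW$ out of all of $T = TW \cup TL$—is unbiased (it is a sample mean of a Bernoulli variable under Assumption~\ref{ass-unif}).

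The main obstacle, such as it is, is the identification $U = W$ on $\{Z=1\}$; everything else is bookkeeping. This identification hinges on the fact that user type has been defined precisely so that, once $Z$ is fixed to $1$, the exposure outcome is deterministic given type. A minor subtlety worth flagging in the write-up is that this is only true under Assumption~\ref{assum-user-once} (no userID appears more than once), since otherwise $W$ could vary across occurrences of the same userID and the clean "$W$ is a function of $U$ and $Z$" reduction would need to be stated more carefully.
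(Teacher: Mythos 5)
Your proposal is correct and follows essentially the same three-step route as the paper's own proof: rewrite $P(U=1)$ as $E[U]$, use ignorability of Test/Control assignment to pass to $E[U \mid Z=1]$, and identify $U$ with $W$ on the Test group. The extra remarks on why ignorability holds and on the role of Assumption \ref{assum-user-once} are sound elaborations rather than a different argument.
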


\begin{lemma} \label{lem-att-prebid}\small
	In the Pre-Bid Randomization setting, the $ATT$ is the ratio of two observable averages:
\begin{equation}
ATT = ACE/P(U=1) = ACE/w = (R_T - R_C)/w, 	\label{eq-att-prebid}
\end{equation}
where $w$ is the win rate, or equivalently, 
\begin{equation}
E[Y(1) - Y(0) | U=1] = E[Y(1) - Y(0)]/P(U=1), 	\label{eq-att-prebid-expec}
\end{equation}

\begin{proof}\small
We can decompose the $ACE$ (Def. \ref{def-ace-noncomp}) as a sum of two terms:
\begin{eqnarray}
ACE &= & E[ Y(1) - Y(0) ] \; = \; R_T - R_C \\
	&= & P(U=1)\cdot (R_{TW} - R'_{CW}) \; +\; \\
	&  &  P(U=0)\cdot (R_{TL} - R'_{CL})  \\
    &= & w \cdot ATT \; + \; (1-w) \cdot (R_{TL} - R'_{CL}),
\end{eqnarray}
where the second term in the last expression vanishes due to Prop. \ref{prop-rtl-rcl}, implying the result.
\end{proof}
\end{lemma}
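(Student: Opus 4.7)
The plan is to decompose the $ACE$ (which is observable as $R_T - R_C$) by conditioning on the latent user type $U$, and then show that the loser-type contribution vanishes, leaving a clean algebraic relation between $ACE$, $ATT$, and $w$. Concretely, I would apply the law of total expectation to $E[Y(1) - Y(0)]$ with respect to the binary variable $U \in \{0,1\}$, writing
\begin{equation*}
ACE \;=\; P(U{=}1) \cdot E[Y(1) - Y(0) \mid U{=}1] \;+\; P(U{=}0) \cdot E[Y(1) - Y(0) \mid U{=}0].
\end{equation*}
Using the notation of Section \ref{sec-response-rates}, the two conditional expectations are $R_{TW} - R'_{CW}$ and $R_{TL} - R'_{CL}$ respectively, with weights $w$ and $1-w$.

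Next, I would identify the winner-type term: by Definition \ref{def-att-noncomp}, $R_{TW} - R'_{CW}$ is exactly the $ATT$. The crucial step is then to argue that the loser-type term $R_{TL} - R'_{CL}$ equals zero. This follows immediately from Proposition \ref{prop-rtl-rcl}, whose justification is that loser-type userIDs are never exposed regardless of their Test/Control assignment (since their bids would lose the auction if submitted, and no bid is submitted for Control), combined with the fact that the hash-based Test/Control assignment is ignorable with respect to potential outcomes. Plugging these in yields $ACE = w \cdot ATT$, and solving gives $ATT = ACE/w = (R_T - R_C)/w$, with the expectation form \eqref{eq-att-prebid-expec} being a restatement in raw expectation notation.

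The main conceptual obstacle, more than anything computational, is justifying that the decomposition is even meaningful when $U$ is latent and unobservable for Control userIDs. The resolution is that $U$ is a property intrinsic to each userID (determined by the factors that would influence the auction outcome were a bid submitted), so conditioning on $U$ is well-defined at the population level even though we cannot directly measure it in the Control arm. Combined with ignorability of $Z$ with respect to $(Y(0), Y(1), U)$ -- which holds by construction of the hash function -- this legitimizes treating $R'_{CL}$ as equal to $R_{TL}$ and thereby cancelling the loser-type term. Once this is in place, the algebra is immediate.
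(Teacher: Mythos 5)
Your proposal is correct and follows essentially the same route as the paper: decompose the $ACE$ by the law of total expectation over the user type $U$, identify the winner-type term as $w \cdot ATT$, and cancel the loser-type term via Proposition \ref{prop-rtl-rcl}. The additional remarks on why conditioning on the latent $U$ is legitimate are a sound elaboration rather than a different argument.
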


Advertisers are often interested in a variant of the $ATT$ that measures the change in response rate relative to the baseline unexposed response rate:

\begin{definition}\label{def-lift}\small
	\textbf{Average Treatment Lift (ATL):} 
\begin{eqnarray}
ATL &= & (R_{TW} - R'_{CW})/R'_{CW}\\
    &= & ATT/R'_{CW}\\
    &= & ATT/(R_{TW} - ATT) \label{eq-atl}	
\end{eqnarray}
\end{definition}

The $ATL$ will be informally referred to as the \textit{Causal Lift}, or \textit{Ad Lift}. An alternative way of expressing the relative (causal) impact of an ad campaign is the \textbf{Incrementality}, which is similar to the $ATL$ except that it is expressed relative to $R_{TW}$, or the response rate of the exposed Test population:

\begin{definition}\label{def-inc}\small
	\textbf{Incrementality (INC):} 
\begin{eqnarray}
INC &= & (R_{TW} - R'_{CW})/R_{TW}\\
    &= & ATT/R_{TW} 	\label{eq-inc}	
\end{eqnarray}
\end{definition}

In other words, $INC$ answers the question, \textit{what fraction of the exposed Test response rate is caused by the ad campaign}, and is therefore necessarily bounded by 1.0 (unlike the $ATL$, which has no upper bound). From the above definitions we see that both the $ATL$ and $INC$ are straightforward functions of $R_{TW}$ and $ATT$.

\subsection{Measuring Response Rates from Impression and Event Logs}
We use the simple method described in Section \ref{sub-causal-post-bid} to estimate the response rates   
$R_T, R_C, R_{TW}, R_{TL}$, which are the response rates of the populations $T,C,TW,TL$, respectively. 
We note that for our methodology it is crucial that the DSP is able to log bid opportunities, and the effectiveness of this technique is proportional to a DSP's ability to view and log a large fraction of all available bid opportunities.
 
\subsection{Assumptions about the Test and Control Groups}

For this discussion, suppose advertiser $A$ is measuring the effectiveness of campaign $c_0$ on the DSP $D$. For brevity, we will say a campaign $c_1$ is \textit{related} to campaign $c_0$, if exposure to $c_1$ has an influence on a consumer's likelihood of performing campaign $c_0$'s desired outcome. In particular, campaign $c_0$ is of course related to itself. 

In general, it is possible that consumers in both the Test and Control groups of campaign $c_0$ are exposed to ads from related campaigns by Advertiser $A$, either from the DSP $D$ in question, from another DSP executing ads for $A$ in digital channels, or from advertising by $A$ in non-digital channels (e.g., traditional TV, print, radio, etc.). Since Test/Control assignment is specifically with respect to campaign $c_0$, exposure from such related campaigns is effectively random, in the sense that userIDs in Test and Control groups (and in particular the winner-types in these groups) for campaign $c_0$ are equally likely to be exposed to those related campaigns. Thus the methodology presented here for measuring $ATL$ and $INC$ remains intact, but with two qualifications:

\begin{itemize}
	\item The interpretation of Control as a baseline corresponding to "no ad exposure" and Ad Lift relative to that "zero" baseline is no longer accurate. Instead, both the Control and Test groups are affected, in identical fashion, by related campaigns that effectively introduce a non-zero baseline of response against which Ad Lift is measured, i.e., they are simply boosting the baseline. 
	\item The interpretation of Ad Lift and Incrementality is then understood to be not "the causal impact of campaign $c_0$ in the absence of all other advertising" but rather "the causal impact of campaign $c_0$ against the background of all related advertising." Presumably, the latter is smaller in magnitude than the former\footnote{The causal impact of campaign $c_0$ in the absence of all other advertising isn't knowable in practice, as it would require an experiment that involved shutting off all related advertising in all channels, waiting for its lagged effect to subside, and running only a single campaign.}, but the latter is all that matters as it represents the real-world impact of campaign $c_0$. The methodology described here correctly quantifies the corresonding Ad Lift, and is the correct quantity against which the advertiser should measure the impact if campaign $c_0$. 
\end{itemize}

In fact, the effective baseline for $c_0$ isn't only affected by related digital and non-digital ad campaigns by Advertiser $A$. The likelihood of consumers to exhibit the desired outcome for campaign $c_0$ is also impacted by  related advertising from $A$'s competitors, by so-called earned media (i.e., what consumers are saying about $A$'s products or services in social media), by seasonality, macro-economic conditions, and myriad other factors. Again, as long as these factors do not systematically affect Test and Control groups for $c_0$ in different ways, they merely serve to determine the (non-zero) Control baseline, and the methodology here pertains. 

Given that the assignment of userIDs to Test and Control is effectively random (as discussed in Section \ref{sec-psa}, based on the hash function $h_k(u)$ as applied to userIDs within DSP $D$), it is reasonable to assume that such factors pose no issue, as they would impact Test and Control in statistically equivalent fashion. It is difficult (though perhaps not impossible) to imagine how external factors, including related campaigns, could result in different \textit{systematic} variation of response between Test and Control (though they can certainly introduce \textit{statistical} variation, i.e., noise). However, given that the Test/Control assigment happens within the DSP, it is possible that some issue internal to the DSP $D$ in question could be causing related campaigns executed by $D$ to introduce some systematic effect, contaminating the results\footnote{Even more problematic would be if the advertiser were to duplicate the {\em exact same} campaign, either on DSP $D$ or another DSP. This could pose additional complications, but we do not address those here. In practice, we do not observe advertisers engaging in this behavior; they may run different campaigns with different DSPs, or run disjoint parts of the same campaign on different DSPs (e.g., running display ads with one DSP and mobile ads with another), but we have not observed many instances of the same advertiser running the exact same campaign on multiple DSPs.}.

To mitigate the impact of these issues, we recommend the following practices be followed when measuring the effectiveness of campaign $c_0$ of advertiser $A$ on DSP $D$:

\begin{enumerate}
		\item Advertiser $A$ should run \textit{all} campaigns related to $c_0$  on the \textit{same} DSP $D$, and 
		\item If advertiser $A$ is running campaigns related to $c_0$ on DSP $D$, then either:
	\begin{itemize}
		\item these are also set up to measure effectiveness, with the same Test and Control assignment function $H(u,p)$, such that a userID is in the Control group for all campaigns related to $c_0$, or
		\item the other campaigns related to $c_0$ target a set of userIDs that is disjoint from the set of userIDs targeted by $c_0$. 
	\end{itemize}
\end{enumerate}

These practices, which should be easy to implement for a DSP alraedy capable of executing Ad Lift measurement via Pre-Bid Randomization, will avoid systematic contamination of the baseline.  

\section{Gibbs Sampling for Confidence Bounds} \label{sec-gibbs}

Until now we have been only concerned with \textit{point estimates} of various types of causal effects. However a point estimate $x$ by itself has little value in the absence of a confidence-interval $(x_0,x_1)$ that indicates, for example, that given the observed data, we are 90\% confident that the true effect is between $x_0$ and $x_1$.

The difficulty in obtaining confidence bounds depends on which type of causal effect we are considering. 
We are mainly interested in the $ATT$ (Def. \ref{def-att-noncomp}) and the 
$ATL$ (the Average Treatment Lift, Def. \ref{def-lift}). 
The equations for the point estimates of these effects (Eqs. \ref{eq-att-noncomp} and \ref{eq-atl} respectively) involve various uncertain response rates and the win rate $w$. 
For example \cite{Imbens1997a} presents an estimate of the standard deviation of the posterior distribution of the $ATT$ under a large-sample normal approximation. 
However this approach cannot be extended to the $ATL$.

We have developed a methodology to compute confidence bounds for both metrics (or any other similarly-defined metric) based on a simple Gibbs sampling scheme. Such schemes were proposed for example by \cite{Chickering1996} and \cite{Barajas2012}, but we believe our methodology, as well as its presentation, is simpler and more intuitive to understand. Moreover, unlike the confidence-interval estimation of \cite{Imbens1997a}, our methodology is relatively assumption-free, works for non-large samples, and does not rely on a normal approximation.

\subsection{Basic Ideas}

For an introduction to Gibbs Sampling we refer the interested reader to \cite{resnik2010gibbs} and the references therein, but here we will give a very informal overview tailored to our purposes. Gibbs Sampling is a special case of a general  method called Markov-Chain Monte-Carlo (MCMC) sampling. MCMC techniques are applicable when we have a Bayesian \textit{generative model} of our data $D$, parameterized by some unknown parameter vector $\pi$, and we want to generate random draws $\pi_0, \pi_1, \ldots$ from the joint posterior distribution of the parameters $\pi$ given the observed data. There are a couple of reasons why one might want to do this: (a) compute an expectation (or average) of some scalar function $f(\pi)$ with respect to the distribution of $\pi$: $E_\pi(f(\pi))$, and (b) for some scalar function $f(\pi)$, compute a range $[f_1, f_2]$ such that 90\% of the probability mass of $f(\pi)$ lies between $f_1$ and $f_2$.

The reason the technique is called a Markov-Chain Monte Carlo method is two-fold: (a) the random draws $\pi_0, \pi_1, \ldots,$ are generated iteratively in such a way that the next draw depends only on the outcome of the preceding draw, such that one can imagine these draws as states in a Markov Chain and the iterative process is performing a "walk" on the Markov Chain, and (b) the applications involve computing quantities (expectations, confidence intervals, etc.) based on simulating (i.e., drawing from) a joint distribution of some stochastic variables (hence the "Monte-Carlo" in the name). 

The crucial property for an MCMC process to be useful is that after some initial number of "burn-in" iterations, the distribution of the draws $\{\pi_i\}$ converges to the true posterior distribution of the parameters $\pi$ given the observed data $D$. It is this property that enables accurate computation of averages, confidence-bounds, etc. Gibbs Sampling is a specific way of performing an MCMC simulation where the parameter vector $\pi$ can be partitioned into $k \geq 2$ parts, and at each iteration, rather than generating the entire new vector $\pi$, each part is generated separately conditional on the latest values of all the other parts.

\subsection{Observed Counts, Hidden Counts, and Parameters}
For our present purpose of computing confidence intervals for $ATT$ and $ATL$, we have the following components in order to construct a Gibbs sampling scheme (it will be helpful to consult Figure \ref{fig-gibbs} for the rest of this section):
\begin{itemize}
	\item The observed data $D$ consists of counts of responders and non-responders among the 3 populations $TW, TL, C$. In general we will write $X_1$, $X_0$ to denote the number of responders and non-responders respectively in population $X$. Thus we have 6 observed counts: $TW_0,TW_1, TL_0, TL_1, C_0,$ and $C_1$
	\item The four \textit{unobserved} counts are the number of responders and non-responders in the counterfactual subsets $CW$, $CL$, where $CW$ is the set of winner types in Control, i.e., the set of Control userIDs who \textit{would have won} if they were assigned assigned to Test, and 
$CL$ is the set of loser types in the Control group. More precisely, 
$CW = \{u_i | U_i=1, Z_i=0\}$, and 
$CL = \{u_i | U_i=0, Z_i=0\}$. The four unobserved counts are $CW_0, CW_1, CL_0, CL_1$. Note the constraints
\begin{equation} \label{eq-cw-cl}
	CW_0 + CL_0 = C_0; \;\;\; CW_1 + CL_1 = C_1
\end{equation}
	\item The response rates $R_{TW}, R_{CW}, R_{TL}=R_{CL}=R_L$ 
and the win rate $w$ are treated as \textit{unknown parameters} 
for which we want to simulate the posterior joint distribution conditional on the data. Recall the equality of $R_{TL}$ and $R_{CL}$ from Prop. \ref{prop-rtl-rcl}. We drop the $'$ superscript on $R_{CW}$ here since we are treating all the response rates as unknowns
\end{itemize}

\begin{figure} \small \centering
\includegraphics[width=0.8\textwidth,angle=270]{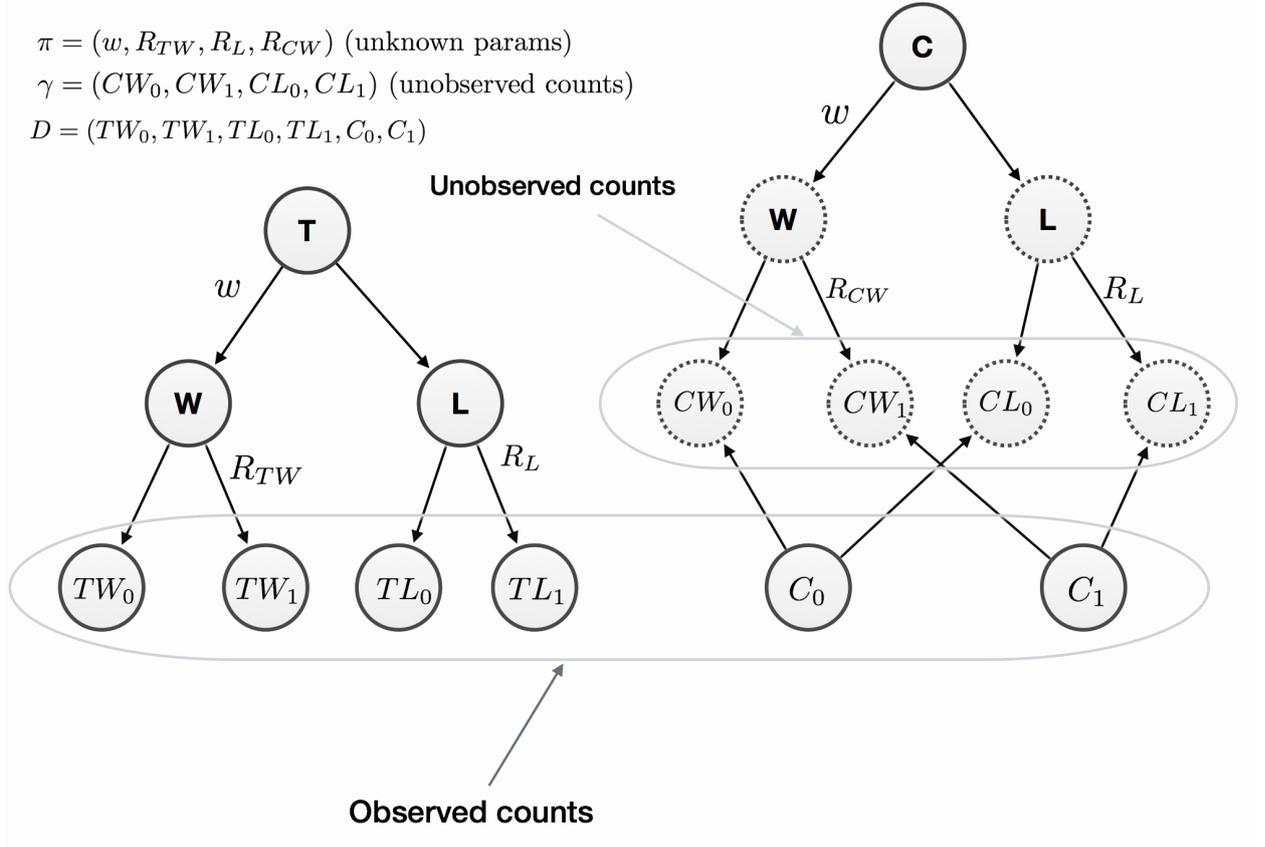}
\caption{\small Random variables and observed counts in the Gibbs Sampling procedure. The nodes labeled $T, C$ denote the Test and Control populations respectively, and the $W, L$ nodes under $T, C$ denote the winner-type and loser-type sub-populations. The other nodes contain variables denoting counts using a mnemonic notation for the population, and the subscript 1 indicates converters, and 0 indicates non-converters. For example $TW_0$ is the number of non-converters among the Test-Win population. Solid circles denote observable counts, while dotted circles denote unobservable counts. Note that although the counts $CW_i, CL_i$ are not observable, the sum $CW_i + CL_i = C_i$ is observable, since it is the total number of converters (for $i=1$) or non-converters (for $i=0$) in the Control group. The arrows are labeled by the probability that an individual userID in the parent-node population falls in the sub-population represented by the child node. For example the arrow from $T$ to $W$ is labeled $w$ to show that the probability that a Test-group individual is a winner type is $w$ (the win rate), and the arrow from $W$ to $TW_0$ is labeled $R_{TW}$ to indicate that the probability of an individual in the $TW$ (Test-Winner) group converts is $R_{TW}$, the Test-Winner response rate. The complementary probabilities $1-w$, etc. are not shown.}
\label{fig-gibbs}
\end{figure}

\subsection{Data Generating Model}

Given the parameters $w, R_{TW}, R_{TL}, R_L$, let us specify the process by which the observed and un-observable counts above are generated: this will constitute the \textit{generative model} for the data, and we will use this model when simulating the draws from the posterior distribution of the parameters given the data.
\newcommand{\bindist}{\text{Binom}}
The total number of winner-type userIDs is $N_W$ = $CW_0 + CW_1 + TW_1 + TW_0$, and the total number of loser-type userIDs is 
$N_L = CL_0 + CL_1 + TL_0 + TL_1$, so $N_W$ is a draw from the Binomial distribution with total trials $N_W + N_L$ and probability of success = $w$, or written more succinctly:
\begin{equation}
N_W \sim \text{Binom}(N_W + N_L, w) \label{eq-like-w}
\end{equation}
Similarly we can write:
\begin{eqnarray}
CW_1 &\sim & \text{Binom}(CW_0 + CW_1, \; R_{CW})\\ \label{eq-like-rcw}
TW_1 &\sim &\text{Binom}(TW_0 + TW_1, \; R_{TW}) \\ \label{eq-like-rtw}
CL_1 + TL_1  & \sim & \text{Binom}(CL_0 + TL_0 + CL_1 + TL_1, \; R_L) \label{eq-like-rl}
\end{eqnarray}	

The generative model for the the unobserved counts $CW_x, CL_x$ ($x \in \{0,1\})$ requires a bit more thought. For example, to arrive at the generative model for $CW_1$ we use the fact that $CW_1$ and $CL_1$ add up to the observable count $C_1$ (Eq. \ref{eq-cw-cl}). This means that $CW_1$ is obtained by a binomial draw from $C_1$ trials, with success probability 

\begin{eqnarray}
\lefteqn{P(\text{winner-type} \;|\; \text{control responder}) }\\
    & = & P( U = 1 | Y(0) = 1) \\
	& = & P( U = 1, Y(0) = 1)/P(Y(0) = 1) \\
	& = & \frac{P( U = 1) P (Y(0) = 1 | U = 1)}
	      {P( U = 1) P (Y(0) = 1 | U = 1) + 
	   	   P( U = 0 ) P (Y(0) = 1 | U = 0)}\\
	& = & \frac{w R_{CW}}{w R_{CW} + (1-w)R_L}.
\end{eqnarray}

Thus we can write:
\begin{equation} \label{eq-draw-cw-cl}
\begin{split}
CW_1 &\sim \text{Binom}(C_1, wR_{CW}/[ wR_{CW} + (1-w)R_L]) \\	
CL_1 &= C_1 - CW1\\
CW_0 &\sim  \text{Binom}(C_0, (1-w)R_{L}/[ (1-w)R_L + wR_{CW}]) \\
CL_0 & = C_0 - CW_0	
\end{split}	
\end{equation}

\subsection{Posterior Parameter Distributions}
The final piece needed to set up a Gibbs sampling process is the specification of the posterior distributions of the parameters $w, R_{TW}, R_{CW}, R_L$.
Following a Bayesian approach, we first specify prior distributions of each of these parameters. To model minimal prior knowledge about these parameters we use a $\text{Beta}(1,1)$ prior, which is equivalent to a uniform distribution over [0,1]. 

We note that each of the Binomial distributions in Eqs. \ref{eq-like-w} - \ref{eq-like-rl} corresponds to a \textit{likelihood} of the data given the respective parameter, e.g., from Eq. \ref{eq-like-w} the likelihood is the probability of selecting $N_W$ winner types from a total of $N_W + N_L$ trials where each has a probability $w$ of being a winner type:
\begin{eqnarray}
P(N_W \; | \; 	N_W + N_L, w) = {N_W + N_L \choose N_W} w^{N_W} (1-w)^{N_L}
\end{eqnarray}

To avoid repeating these expressions, we use the notation 
$\mathcal{L}_{bin}(k; n, p)$
 to denote the Binomial likelihood of $k$ successes out of $n$ trials, with success probability $p$, or ${n \choose k}p^k (1-p)^{n-k}$. We also use the following property: 
 
\begin{proposition} \label{prop-conj}\small
The Beta distribution is a \textit{conjugate prior} to the Binomial likelihood, i.e., if the prior distribution on a probability $p$ is a Beta($\alpha$, $\beta$) distribution, and the likelihood is a binomial likelihood $\mathcal{L}_{bin}(k; n,p)$, then the posterior of $p$ is also a Beta distribution, specifically with parameters ($\alpha + k, \beta + n-k$):
\begin{eqnarray}
P(p | n, k) &\propto  \mathcal{L}_{bin}(k; n,p) \text{Beta}(\alpha, \beta)\\	& \propto  \text{Beta}(\alpha + k, \beta + n-k)
\end{eqnarray}
\end{proposition}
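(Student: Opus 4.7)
The plan is to prove Proposition \ref{prop-conj} by a direct application of Bayes' rule, exploiting the algebraic form of the Beta density to recognize the posterior kernel. The key observation is that both the Beta prior and the Binomial likelihood, viewed as functions of $p \in [0,1]$, are products of a power of $p$ and a power of $(1-p)$, so their product is again of this form, and hence proportional to another Beta density.

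First I would write Bayes' rule for a continuous parameter $p$ with discrete observation $k$:
\begin{equation*}
P(p \mid n,k) \;=\; \frac{\mathcal{L}_{bin}(k;n,p)\,\text{Beta}(p;\alpha,\beta)}{\int_0^1 \mathcal{L}_{bin}(k;n,p')\,\text{Beta}(p';\alpha,\beta)\,dp'}.
\end{equation*}
Since the denominator does not depend on $p$, I may treat it as a normalizing constant and work up to proportionality. Next I would substitute the explicit forms
\begin{equation*}
\mathcal{L}_{bin}(k;n,p) \;=\; \binom{n}{k}\, p^k (1-p)^{n-k}, \qquad \text{Beta}(p;\alpha,\beta) \;=\; \frac{1}{B(\alpha,\beta)}\, p^{\alpha-1}(1-p)^{\beta-1},
\end{equation*}
and absorb the factors $\binom{n}{k}$ and $1/B(\alpha,\beta)$, which are constant in $p$, into the proportionality. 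This yields
\begin{equation*}
P(p \mid n,k) \;\propto\; p^{(\alpha+k)-1} (1-p)^{(\beta+n-k)-1},
\end{equation*}
which I would then identify as the unnormalized density of a $\text{Beta}(\alpha+k,\,\beta+n-k)$ distribution.

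To conclude, I would invoke the standard uniqueness argument: a probability density on $[0,1]$ is determined by its unnormalized form, so the normalizing constant is forced to be $1/B(\alpha+k,\beta+n-k)$, giving
\begin{equation*}
P(p \mid n,k) \;=\; \text{Beta}(p;\,\alpha+k,\,\beta+n-k).
\end{equation*}
Honestly, there is no serious obstacle here; the result is entirely algebraic once one writes down the two densities. The only thing to be slightly careful about is keeping track of which factors depend on $p$ versus $n,k,\alpha,\beta$, so that the $p$-independent factors are correctly absorbed into the normalizing constant and the exponents $\alpha+k-1$ and $\beta+n-k-1$ line up without off-by-one errors.
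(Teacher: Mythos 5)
Your proof is correct and follows exactly the argument that the paper leaves implicit in the two proportionality lines of the proposition itself (the paper states this as a known property and gives no separate proof): apply Bayes' rule, note that $p^k(1-p)^{n-k}\cdot p^{\alpha-1}(1-p)^{\beta-1}=p^{\alpha+k-1}(1-p)^{\beta+n-k-1}$, and recognize the $\text{Beta}(\alpha+k,\beta+n-k)$ kernel. Nothing is missing, and your care about which factors are constant in $p$ is exactly the right (and only) point of attention.
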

	
Thus from Eqs. \ref{eq-like-w} - \ref{eq-like-rl} we can write the following posterior distributions (note in our case $\alpha=\beta=1$):
\newcommand{\betadist}{\text{Beta}}
\begin{equation} \label{eq-post}
\begin{split}
w &\sim  \betadist(1 + CW_1 + CW_0 + TW_1 + TW_0,
					  1 + CL_1 + CL_0 + TL_1 + TL_0) \\
R_{TW} & \sim 	\betadist(1 + TW_1, 1 + TW_0) \\
R_{CW} & \sim  \betadist(1 + CW_1, 1 + CW_0) \\
R_L & \sim  \betadist(1 + CL_1 + TL_1, 1 + CL_0 + TL_0)	
\end{split}
\end{equation}

\subsection{Gibbs Sampling Procedure}

We are now ready to specify the actual Gibbs sampling procedure. For brevity we will let $\pi$ denote the vector of the four unknown probability parameters $w,R_{TW}, R_{CW}, R_L$, and let $\gamma$ denote the vector of the four unobservable counts (see Eq. \ref{eq-cw-cl}) $CW_1, CW_0, CL_1, CL_0$. Our goal is to iteratively generate a sequence of realizations of $[\pi,\gamma]$ such that after some initial "burn-in" period (typically around 1000 iterations), these realizations represent draws from the true joint distribution of $[\pi, \gamma]$. The sequence of realizations of $[\pi,\gamma]$ will be denoted $[\pi^0, \gamma^0], [\pi^1, \gamma^1], \ldots$, and we use the same superscript notation to refer to iterative realizations of the components of $\pi$ and $\gamma$.

We first initialize $\pi$ to $\pi^0$ using reasonable estimates from the observed counts:
\begin{equation}
\begin{split}
w^0 &= (TW_0 + TW_1)/(TW_0 + TW_1 + TL_0 + TL_1) \\
R^0_{TW} &= TW_1/(TW_0 + TW_1) \\
R^0_{CW} &= TW_1/(TW_0 + TW_1) \\
R^0_L &= TL_1/(TL_0 + TL_1),
\end{split}
\end{equation}
where we note that we initialized $R_{CW}$ using the observed \textit{Test}-Winners response rate.

Next we update $[\pi,\gamma]$ iteratively, where the $i$'th iteration ($i = 0,1, 2, \ldots$) consists of two steps:
\begin{enumerate}
	\item Generate $\gamma^{i+1}$ given $\pi^i$, using equations \ref{eq-draw-cw-cl}. For example to generate the count $CW^{i+1}_1$, we make a random draw from $\bindist(C_1, p)$ where the success probability $p$ is 
$$
w^i R^i_{CW}/[ w^i R^i_{CW} + (1-w^i)R^i_L])
$$

	\item Generate $\pi^{i+1}$ given $\gamma^{i+1}$, using equations \ref{eq-post}. For example to generate $R^{i+1}_{TW}$ we make a random draw from $\betadist(1+TW^{i+1}_1, 1 + TW^{i+1}_0)$.

\end{enumerate}

Once we pass a suitable "burn-in" number of iterations $N$, at each of the $k$ subsequent iterations $i = N+1, N+2,\ldots,N+k$, we calculate the metrics $ATT^i, ATL^i$, which are simple functions of the probability-parametres $\pi^i$ at iteration $i$ 
(see Eqs. \ref{eq-att-noncomp}, \ref{eq-atl}). 
We typically use $N=1000$, and $k = 2000$. The 90\% confidence bounds on $ATT$ are then given by the 5th and 95th percentiles of the collected values $\{ATT^i\}$, and similarly for the confidence bounds of $ATL$. Note that we can use the averages of $\{ATT^i\}$ and $\{ATL^i\}$ as an alternative to the point-estimates in Eqs. \ref{eq-att-noncomp} and \ref{eq-atl}, and also as a sanity check on the Gibbs sampling.

\section{Complication: Recurring UserIDs}  \label{sec-repeat}

Thus far in this paper we have assumed that a userID never occurs more than once in the logs data used to measure Causal Lift. 
In reality, a given userID may (and typically does) appear multiple times in the bid-request stream, as the consumer engages with different websites or mobile apps. In this section we will extend our methodology to handle this situation. 

To see the difficulty caused by recurring userIDs, consider a naive approach where we treat each bid opportunity for a given userID as a distinct unit, pretending they are all different userIDs and carrying out the analysis presented earlier. Recall that our methodology described in Sec. \ref{sec-causal-est} hinges on identifying the \textit{winner} and \textit{loser} subsets of the Test group, i.e., the $TW$ and $TL$ populations. 
For a given Test userID, we might say that a bid opportunity is in $TW$ if the bid opportunity results in a win, and is in $TL$ otherwise. This would be problematic because if an individual consumer appeared with a certain userID $u$ in a won bid opportunity at time $t$ (and hence was exposed to the ad campaign), and then re-appeared in a lost bid opportunity (with the same userID $u$ at time $t + \delta$, this consumer could still very well be "under the influence" of the ad exposure from time $t$.
This means the potential outcomes $Y(0), Y(1)$ subsequent to the second bid opportunity would not be those that one would expect from an unexposed consumer. 
In the extreme case, if the gap $\delta$ between the first and second bid opportunities is close to zero, it is quite likely that the "if-unexposed potential outcome" $Y(0)$ is identical to $Y(1)$, i.e., \textit{regardless of whether the second bid opportunity results in a win or not}, the two potential outcomes are \textit{identical}. This extreme case is actually quite likely in practice, corresponding for example to consecutive bid opportunities that might occur as a consumer clicks through from one page of a website to another.

Thus, if we treat each bid opportunity as a unit, there will be interference among units in the sense that the potential outcomes of a unit would depend on whether or not other units were exposed. This violates a fundamental requirement that underlies the potential outcomes framework, called the Stable Unit Treatment Value Assumption (SUTVA), \cite{Rubin_Causal_2005}. This assumption states, roughly speaking, that the potential outcomes of a unit do not depend on the exposure-status of another unit (and so the SUTVA is informally referred to as the "no-interference" assumption). This assumption was not explicitly called out in our analysis thus far because we have been assuming that each userID occurs exactly once, and hence it is very reasonable to assume no interference between units (i.e. userIDs) in that scenario.
To avoid violating the SUTVA, instead of treating each individual bid opportunity of a userID as a unit, we make the following modification to our methodology:

\begin{modification}[Treat UserID as a Unit]
Treat the userID itself as a unit, which effectively means that each "unit" now represents the set of \textit{all} bid opportunities that have a given userID. 
\end{modification}

With this modification, units now correspond to individual consumers, who may each be the subject of multiple bid opportunities. Clearly now there will be no interference between units, \textit{assuming each distinct userID corresponds to a distinct consumer} (the next section will consider the case where this assumption is not valid): ignoring social-network effects (which are generally not predicated on advertising), one consumer's ad exposure will not affect another's potential outcomes. In fact, defining units in this way is quite natural from an advertiser's perspective: they are primarily interested in measuring the causal impact of their campaigns at the level of \textit{unique userIDs} (i.e., their customers and potential customers) rather than individual bid opportunities.

In order to apply our methodology, we now need to specify how to define whether a unit is a Test-Winner ($TW$) or Test-Loser ($TL$) (the definition of $C$, or control, is clear because Test/Control assignment is based on the userID):

\begin{modification}[Definition of Test-Winner and Test-Loser UserIDs]
A Test userID $u$ is considered to be a Test-Winner if \textit{at least one} of the bid opportunities involving $u$ is a \textit{win}, otherwise it is considered a Test-Loser.
\end{modification}

In the case of non-recurring userIDs, determining whether not a bid opportunity is associated with a conversion is a simple matter: we simply check whether there is a conversion by the userID within the campaign-designated post-view (PV) window after the bid opportunity. 
When userIDs occur more than once, we define the response as follows:

\begin{modification}[Attribution of Conversions to UserIDs] \label{mod-response}
	For a given userID $u$, a conversion by $u$ is \textit{attributable} to $u$ if there is a bid opportunity for $u$ within the campaign-designated post-view (PV) window prior to the conversion event. The \textit{response} $Y_u$ of the userID is then defined as the total number of conversions attributable to $u$.
\end{modification}

This definition of the response random variable $Y_u$ allows us to extend the definitions of potential outcomes ($Y_u(0)$ and $Y_u(1)$) and the various response rates introduced in Section \ref{sec-response-rates}, to the case of recurring userIDs. For example, the response rate $R_{TW} = E[Y(1) | U=1]$ is simply the total number of conversions attributed to Test-Winner userIDs (as in Modification \ref{mod-response}), divided by the number of unique Test-Winner userIDs. With these definitions, it is easy to apply the methodology to compute $ATT$ and $ATL$ in Section \ref{sec-causal-est}.

Note that the introduction of these modifications means that different consumers in the Test group may be subject to different degrees of ad exposure (i.e., because the DSP may win only one bid opportunity against some Test-Winner userIDs, and multiple bid opportunities against other Test-Winner userIDs.). In theory, consumer response to advertising should be some function of the degree (i.e., frequency) of exposure. For the current purposes, the shape of this function is irrelevant, in the sense that the response rates in question (i.e., $R_{T}$, $R_{TW}$, etc.) are simply aggregates over the corresponding populations. The degree of exposure that drove the response rates does not affect the calculation of Causal Lift. What would be affected is the measured return on investment (ROI) of the campaign, which looks at how much spend was required to generated the observed lift, as higher frequency of exposure would entail higher cost and vice-versa. ROI considerations will be visited in later work; the current approach seeks to quantify the Causal Lift independent of the spend that was required to produce it.

\section{Complication: ID Contamination} \label{sec-contam}

The methodology thus far has been based on the \textit{userID} as the unit of study, and essentially calculates the \textit{userID-level} Causal Lift (even when a userID occurs multiple times). However, as mentioned in the Introduction, a given consumer may be associated with multiple userIDs at any given time, and these often change over time. 
In reality, though we have been using the term "userID," these identifiers actually correspond to browsers or devices, rather than the actual consumers using them. Perhaps the most well-known type of userID is a browser cookie: a piece of data generated by code on a website, that is exchanged between the website and the browser and serves to identify the individual instance of the web browser being used on a particular device. Increasingly, cookies are periodically deleted, either automatically by the browser (e.g., after each browsing session) or manually by the user (e.g., see \cite{comscore-2007}).
\footnote{The situation is even worse with respect to browsers such as Safari, which default to not accepting cookies to begin with and therefore do not  allow for recognition of the device at all. In such cases, all cookie-based response rates would effectively be zero, as conversions cannot be associated with bid opportunities, and Ad Lift becomes indeterminate.}
Thus, cookies do not necessarily persist across different observations of the same consumer, such that a given consumer could appear with a distinct (cookie-based) userID on different bid opportunities. In the context of pre-bid randomization, this would generally result in some of the consumer's userIDs being assigned to Test and others to Control, and we refer to this as \textit{cookie contamination.}
\footnote{Strictly speaking, the issues of cookie deletion and non-acceptance are most pronounced for {\em third-party} cookies that advertising technology platforms like DSPs use to perform targeting, optimization, and measurement of digital advertising, These are distinct from {\em first-party} cookies associated with the publishers of the websites in question. Though first-party cookies tend to be more robust, they are also subject to deletion, and in any case it is typically third-party cookies that power measurement solutions, and hence we refer to these simply as "cookies."}.

Another common type of userID is a mobile advertising ID (such as an Apple IDFA or Google Android ID), which is used to identify the mobile device (commonly, a smart-phone) used by a consumer. These differ from cookies in two principal ways: (1) unlike cookies, mobile advertising IDs are stable, i.e., persistent over time\footnote{A userID can be considered stable for our purposes if the time-frame over which they persist is significantly greater than the PV window, $V_c$, of the campaign in question.}, and (2) mobile advertising IDs identify a particular device, whereas cookies identify a particular instance of a browser on a particular device (and consumers may in fact use more than one browser on a given device; this is actually a common behavior). Consumers are increasingly using multiple connected devices -- from phones, to tablets, to watches -- with that trend expected to continue. Thus, while a device has a unique and persistent mobile advertising ID, a consumer may be associated with several mobile advertising IDs. 
Bid opportunities on different mobile devices owned by the same consumer can therefore have different (mobile advertising ID-based) userIDs, and once again this leads to a situation where some userIDs for a given consumer are assigned to Test and others to Control. We refer to this as \textit{cross-device contamination}.

Both cookie contamination and cross-device contamination are highly prevalent, and are not mutually exclusive with respect to a given consumer. Though the underlying mechanisms differ between them -- cookie contamination arising from the inability of cookie-based userIDs to persistently identify a particular browser across bid opportunities and conversions, and cross-device contamination arising from multi-device consumers appearing with different mobile advertising ID-based userIDs across bid opportunities and conversions -- both effectively result in a situation whereby a given consumer has many different userIDs.\footnote{
The reverse scenario is also possible, where a given userID (either cookie or mobile advertising ID) may correspond to multiple consumers (as might be the case with shared household usage of a single device), though the single-identifier-to-many-consumers scenario is in practice relatively minor compared to the many-identifiers-to-single-consumer scenario.} In particular, if a Test-group userID for a given consumer is exposed to an ad campaign, other userIDs associated with that consumer could subsequently exhibit the response rates of an exposed userID, including ones in Control. In fact, this is another form of SUTVA violation (as described in Section \ref{sec-repeat}): there is interference among units, and specifically, the exposure status of a userID influences the potential outcomes of another userID associated with the same actual consumer. 

Intuitively, one expects that either form of ID contamination would lead to a \textit{dilution} in the measured lift, i.e., the measured userID-level lift would \textit{understate} the true consumer-level lift. The following subsection uses a simple model to quantitatively illustrate this effect, and will further show that the impact can go beyond mere dilution.

\subsection{Impact of Cross-device Contamination}
Several authors have explored the quantitative impact of ID contamination on advertising effectiveness. For example, \cite{coey2016people} present an analysis of the impact of cookie contamination on advertising effectiveness measurement, using both simulations and analytical methods. They demonstrate that cookie-based measurement systematically underestimates the true underlying "person-level" treatment effect, and further show that, under reasonable assumptions, the attenuation factor equals the average number of cookies per user. Based on data from Facebook, the authors of that paper conservatively estimate the average number of cookies per user, over a one-month period, to be at least 3. MediaMath's own data suggests this number can easily be as high as 6 or 7 over a one-year time-frame, with a long tail of users having significantly many more. Thus, cookie contamination clearly has the potential to significantly reduce the signal-to-noise ratio in Ad Lift measurements.

For the purposes of illustration, we consider a simple "toy model" scenario (illustrated in Figure \ref{fig-contam-dilution}) describing cross-device contamination and its directional effects on Ad Lift measurement:
\begin{enumerate}
	\item Each consumer has exactly $k$ device-based userIDs (or just "devices" for short). We assume that on a given device, the userID remains fixed during the period of analysis. We can think of devices as partitioned into clusters of $k$ devices each, and each cluster corresponds to an actual consumer.
	\item Each device is randomly assigned to Control ($C$) with probability $p$, and to Test ($T$) with probability $1-p$.
	\item Win rate is 100\%, i.e., \textit{all} Test devices are exposed to the ad campaign under study.
	\item There are two types of consumers: type-$C$, i.e., those with all $k$ devices in Control (this happens with probability $p^k$), and type-$T$, i.e., those with {\em at least} one device in Test (which happens with probability $1-p^k$). Since every Test device is exposed to an ad, the \textit{userID}-level randomized assignment to Test/Control results in a random assignment to type-$T$ and type-$C$ at the \textit{consumer} level, with the probability of assignment to type-$C$ being $p^k$.
	\item A type-$T$ consumer has a conversion probability of $r_t$, and is \textit{equally likely} to convert from any of their $k$ devices\footnote{Note that we are implicitly assuming here that a consumer's response probability is flat regardless of how many times they are exposed to an ad campaign. In reality, we expect response to vary as a function of frequency of exposure, but we assume frequency independence here in order to arrive at a simple, quantitative, directional estimate of the impact of cross-device contamination.}, i.e., the conversion probability for each device is $r_t/k$. 
	\item A type-$C$ consumer has a conversion probability of $r_c$, and is \textit{equally likely} to exhibit a conversion on \textit{each} of their $k$ devices, i.e. each device has a conversion probability of $r_c/k$.
	\item The true consumer-level lift (in the sense of $ATL$, Definition \ref{def-lift}) is $a$, which means $r_t = r_c(1+a)$.
\end{enumerate}

\begin{figure}\centering
\includegraphics[width=0.9\textwidth]{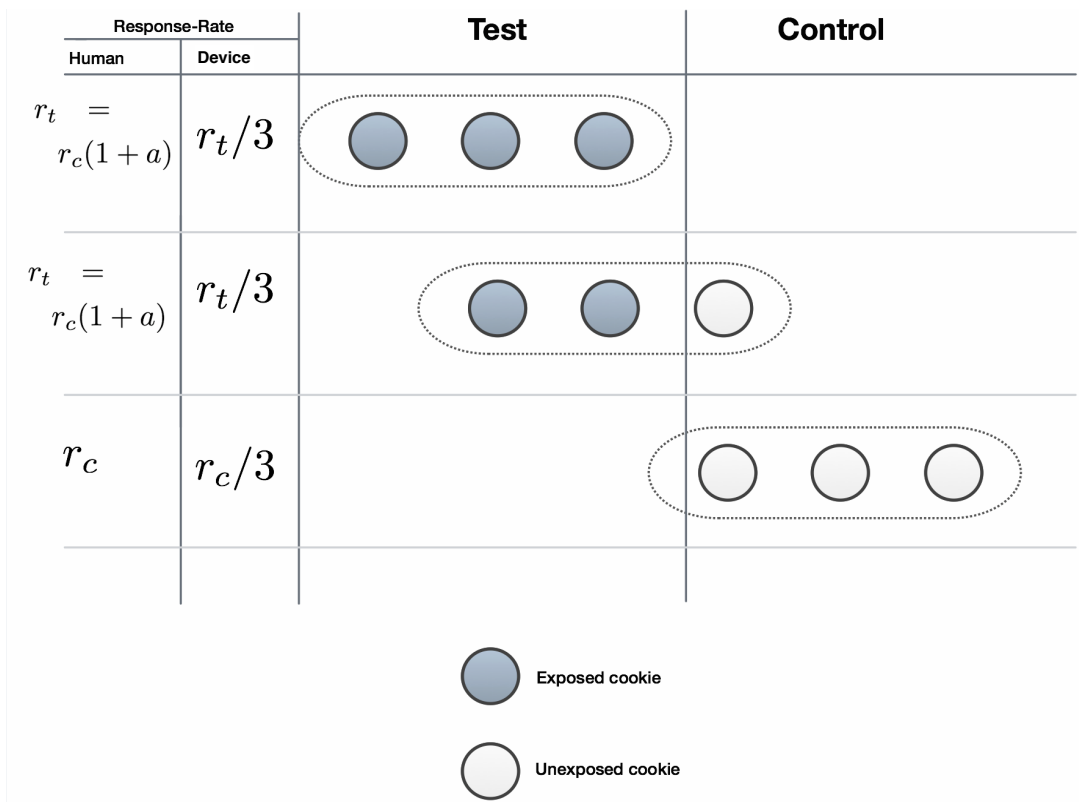}
\caption{\small A simple cross-device contamination scenario where each consumer has exactly three devices, and each device (represented by a circle) is randomly assigned to Control with probability $p$ (and to Test with probability $1-p$). The device distributions of three consumers are shown. Each row of devices is associated with a distinct consumer. The two Response Rate columns show the response rates of the consumer, and that of each device of the consumer.}
\label{fig-contam-dilution}
\end{figure}

We can then show the following:

\begin{proposition}\small [Dilution due to Cross-device Contamination]
	In the above scenario where each consumer has $k$ devices, and each device is assigned to Control with a probability $p$, all Test devices are exposed to ads, and the true consumer-level causal Ad Lift (in the sense of $ATL$, Definition \ref{def-lift}) is $a$, then the lift measured at device (i.e., userID) level will be 
\begin{equation} \label{eq-atl-contam}
	ATL = 	a p^{k-1}/(1 + a - a p^{k-1})
\end{equation}	

\end{proposition}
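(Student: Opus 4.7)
The plan is to compute the device-level $ATL$ directly from Definition \ref{def-lift}, namely $ATL = (R_{TW} - R'_{CW})/R'_{CW}$, by evaluating $R_{TW}$ and $R'_{CW}$ at the device (userID) level under the toy model. Since the win rate is assumed to be $100\%$, every Test device is exposed, so every device is a ``winner type'' in the sense of Section \ref{sec-user-types}; hence $R_{TW}$ is simply the response rate of an arbitrary Test device, and $R'_{CW}$ is the response rate of an arbitrary Control device.

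First I would compute $R_{TW}$. Any Test device necessarily belongs to a type-$T$ consumer (since ``type-$T$'' is defined as having at least one device in Test). A type-$T$ consumer has consumer-level response probability $r_t$ distributed uniformly across their $k$ devices, so each device contributes $r_t/k$. Thus $R_{TW} = r_t/k = r_c(1+a)/k$ by the assumption $r_t = r_c(1+a)$.

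Next, I would compute $R'_{CW}$ by conditioning on the type of consumer owning the Control device. The main obstacle — and really the only nontrivial step — is the Bayesian bookkeeping for the mixture of consumer types contributing to the Control pool: a Control device can come from a type-$C$ consumer (probability $p^k$ marginally, with all $k$ devices in Control) or from a type-$T$ consumer who happens to have at least one other device in Test. Using Bayes,
\begin{equation*}
P(\text{type-}C \mid \text{device in Control}) = \frac{p^k \cdot 1}{p} = p^{k-1},
\end{equation*}
and the complementary probability of type-$T$ is $1 - p^{k-1}$. A device of a type-$C$ consumer converts with probability $r_c/k$ and one of a type-$T$ consumer with probability $r_t/k$, so
\begin{equation*}
R'_{CW} = \frac{1}{k}\bigl[p^{k-1} r_c + (1-p^{k-1}) r_t\bigr] = \frac{r_c}{k}\bigl[1 + a(1 - p^{k-1})\bigr].
\end{equation*}

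Finally I would plug into the $ATL$ formula. The factor of $r_c/k$ cancels, giving
\begin{equation*}
\frac{R_{TW}}{R'_{CW}} = \frac{1+a}{1 + a - a p^{k-1}},
\end{equation*}
and subtracting $1$ yields $ATL = a p^{k-1}/(1 + a - a p^{k-1})$, as claimed. The step I expect to cause readers the most pause is the Bayesian mixture computation for $R'_{CW}$; everything else is arithmetic, and no Gibbs machinery or ignorability arguments beyond what is already established are needed because the toy model supplies closed-form populations directly.
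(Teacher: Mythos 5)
Your proposal is correct and follows essentially the same route as the paper: compute the device-level Test response rate as $r_t/k$, compute the Control-device response rate as a $p^{k-1}$-weighted mixture of $r_c/k$ and $r_t/k$, and take the ratio. Your Bayes computation for $P(\text{type-}C \mid \text{device in Control}) = p^k/p = p^{k-1}$ is just another way of stating the paper's observation that the $k-1$ sibling devices of a Control device are each independently in Control with probability $p$, so the two arguments coincide.
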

\begin{proof}\small
Let $R_t$, $R_c$ be the measured Test and Control response rates respectively, at the device-level.
All Test devices must belong to a type $T$ consumer and hence have conversion probability $R_t = r_t/k$. 
On the other hand, the conversion probability of a Control device will depend on whether or not it is "contaminated", i.e., whether or not a "sibling" of that device (i.e., one that belongs to the same consumer) is in Test.
A given device for a Control consumer has probability $1-p^{k-1}$ of being contaminated (in which case its response rate would be $r_t/k$), and probability  $p^{k-1}$ of being "clean" (i.e., all its siblings are in Control, in which case its response rate would be $r_c/k$). 
Thus the measured Control device response rate would be 
\begin{eqnarray}
R_c & = & p^{k-1} r_c/k + (1- p^{k-1})r_t/k \\
    & = & p^{k-1} r_c/k + (1- p^{k-1})r_c(1+a)/k \\
    & = & (1 + a - a p^{k-1})r_c/k,
\end{eqnarray}
which implies 
\begin{eqnarray}
R_t - R_c & = & r_c a p^{k-1}/k,
\end{eqnarray}
and so the lift is given by 
\begin{eqnarray}
ATL & = & (R_t - R_c)/R_c \\
        & = & a p^{k-1}/(1 + a - a p^{k-1}).
\end{eqnarray}
\end{proof}

The above result shows that as $k$ (the number of devices per consumer) increases, the measured device-level lift drops \textit{exponentially} in $k$, since the term involving $p^{k-1}$ in the denominator can be neglected for large $k$ values, and the expression reduces to $a p^{k-1}/(1 + a)$. As a sanity check we can verify that the $ATL$ expression (\ref{eq-atl-contam}) reduces to $a$ when $k=1$. It is also worth noting that the above toy model assumed (as stated in assumption 1 above) that while consumers have $k$ devices each, the identifiers on those devices are stable (i.e., persist) for at least the time between the bid opportunity and the desired outcome. In fact, this is often not the case, and if some conversions occur on devices {\em after} the userIDs for those device have changed (e.g., as could happen in the case of cookie contamination), then these conversions are effectively "lost", in the sense that they would not be attributed to a (Test or Control) bid opportunity. This would result in further depression of both the Test and Control response rates (assuming that cookie non-acceptance and cookie deletion are equally common among Test and Control consumers), effectively decreasing the signal-to-noise ratio even further.

The above illustration might lead one to conclude that at worst, ID contamination would drive the Causal Lift estimation to zero. This makes intuitive sense: when contamination is severe, every Control device essentially has the response rate of a Test (exposed) device, and \textit{as long as the composition of the type-$C$ and type-$T$ consumer populations is the same}, the measured causal Ad Lift would be at worst zero. In fact, ID contamination can not only depress Causal Lift to zero; under realistic conditions it can transform positive actual Ad Lift at the consumer level into negative measured Ad Lift at the userID level, as we will now describe.

In the above scenario, zero (or extremely low) measured lift resulted from high $k$ under the assumption that each consumer has {\em exactly} $k$ devices. If we instead make the more realistic assumption that the population consists of some consumers with just one device, as well as others with multiple devices, then we can show, in general, that consumers with multiple devices will constitute a higher fraction of the Control population than of the Test population. The following result shows a simple example where a non-constant device distribution can result in multi-device human users being over-represented in Control compared to Test.
\begin{proposition}\small [Multi-Device Skew] \label{prop-md-skew}
Consider a scenario where each device is randomly assigned to Control with probability $p$ (and to test with probability $1-p$). For simplicity, suppose there are two kinds of consumers: "1D consumers", i.e., those with just one device, and "2D consumers", i.e., those with exactly two devices. Let the random-variable $T_i$ represent the number of "$i$-D" consumers with at least one device in Test, and $C_i$ represent the number of "$i$-D" consumers with at least one device in Control. Then for a sufficiently large population size the expected value of $C_2/C_1$ is $(2-p)/(1+p)$ times larger than the expected value of $T_2/T_1$, i.e., the ratio of 2D to 1D users in Control is $(2-p)/(1+p)$ times larger than the corresponding ratio in Test.
\end{proposition}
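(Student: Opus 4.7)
My plan is to first compute, for each consumer type, the probabilities of having at least one device in Test and at least one device in Control, and use these to write closed-form expressions for the expectations $E[T_1], E[C_1], E[T_2], E[C_2]$. A 1D consumer has at least one device in Test (resp.\ Control) with probability $1-p$ (resp.\ $p$). A 2D consumer, by considering the complement, has at least one device in Test with probability $1-p^2 = (1-p)(1+p)$, and at least one device in Control with probability $1-(1-p)^2 = p(2-p)$. Letting $N_1$ and $N_2$ denote the number of 1D and 2D consumers, linearity of expectation gives
\begin{align*}
E[T_1] &= N_1(1-p), & E[C_1] &= N_1 p,\\
E[T_2] &= N_2(1-p)(1+p), & E[C_2] &= N_2\, p(2-p).
\end{align*}

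Next I would form the ratios of expectations and observe that
\[
\frac{E[T_2]}{E[T_1]} = \frac{N_2}{N_1}(1+p), \qquad \frac{E[C_2]}{E[C_1]} = \frac{N_2}{N_1}(2-p),
\]
so that $(E[C_2]/E[C_1])\big/(E[T_2]/E[T_1]) = (2-p)/(1+p)$, which is exactly the claimed multiplicative factor. The step remaining is to pass from the ratio of expectations to the expectation of the ratio, since the proposition is phrased in terms of $E[C_2/C_1]$ and $E[T_2/T_1]$. For this I would invoke a concentration argument: each of $T_1, C_1, T_2, C_2$ is a sum of independent indicator variables (one per consumer), so by the weak law of large numbers each normalized count $T_i/N_i$ or $C_i/N_i$ converges in probability to the corresponding Bernoulli success probability as the population grows. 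The continuous mapping theorem then gives $C_2/C_1 \to E[C_2]/E[C_1]$ and $T_2/T_1 \to E[T_2]/E[T_1]$ in probability; since both ratios are bounded whenever $N_1, N_2 \geq 1$ (and with high probability neither denominator is zero for large $N_1$), bounded convergence justifies passage to expectations, yielding $E[C_2/C_1]/E[T_2/T_1] \to (2-p)/(1+p)$.

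The main obstacle I anticipate is the final step, i.e.\ handling $E[X/Y]$ versus $E[X]/E[Y]$ rigorously, because in small samples $C_1$ (or $T_1$) can be zero, making the ratio undefined or highly variable, and in general $E[X/Y] \neq E[X]/E[Y]$. This is precisely the reason the proposition requires ``sufficiently large population size.'' To make this precise I would either (i) truncate on the event $\{C_1 \geq 1, T_1 \geq 1\}$ and bound the contribution of the complement, which has probability decaying exponentially in $N_1$, or (ii) apply the delta method to the smooth function $(x,y)\mapsto x/y$ at the point $(E[C_2]/N_2, E[C_1]/N_1)$, which shows the $O(1/\sqrt{N})$ fluctuations vanish in the limit. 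Either approach establishes the stated asymptotic relation; the rest of the argument reduces to the elementary Bernoulli probability computations above.
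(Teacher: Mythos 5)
Your proposal is correct and follows essentially the same route as the paper: compute the per-consumer inclusion probabilities, obtain $E[C_1]=N_1p$, $E[C_2]=N_2(2p-p^2)$, $E[T_1]=N_1(1-p)$, $E[T_2]=N_2(1-p^2)$, and take the ratio of ratios to get $(2-p)/(1+p)$. The paper handles the passage from ratios of expectations to expectations of ratios by citing a Taylor-expansion approximation for large $N_1,N_2$, which is the same device as your delta-method/concentration argument (you simply spell it out more carefully, including the issue of a vanishing denominator).
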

\begin{proof}\small
Suppose there are $N_1$ 1D consumers, and $N_2$ 2D consumers. For each 1D consumer, the chance that their device is in Control is $p$. For each 2D consumer, the probability that \textit{at least one} of their devices is in Control is $1 - (1-p)^2$. Therefore, 
\begin{eqnarray*}
	E[C_1] &=& N_1 p\\
	E[C_2] &=& N_2 (1 - (1-p)^2) = 2p - p^2,
\end{eqnarray*}
It is possible to show using a Taylor expansion (see, e.g., \cite{taylor-stack-exchange})
that for sufficiently large $N_1, N_2$, the expectation of the ratio $C_2/C_1$ can be closely approximated by the ratio of the expectations, i.e., 
\begin{eqnarray*}
E[C_2/C_1] \simeq E[C_2]/E[C_1] = (N_2/N_1)(2p-p^2)/p = (N_2/N_1)(2-p).	
\end{eqnarray*}

Similarly, for each 1D consumer, the chance that their device is in Test is $1-p$, and for each 2D consumer, the chance that at least one of their devices is in Test is $1 - p^2$. This implies that
\begin{eqnarray*}
	E[T_1] &=& N_1 (1-p)\\
	E[T_2] &=& N_2 (1 - p^2),
\end{eqnarray*}
and once again we use the approximation
\begin{eqnarray*}
E[T_2/T_1] \simeq E[T_2]/E[T_1] = (N_2/N_1)(1-p^2)/(1-p) = (N_2/N_1)(1+p),
\end{eqnarray*}
from which the result follows:

\begin{eqnarray*}
E[C_2/C_1] \simeq E[T_2/T_1] (2-p)/(1+p)
\end{eqnarray*}

\end{proof}

In practical randomized measurements of Causal Lift, the probability $p$ of a device (or cookie, for that matter) being assigned to Control is much smaller than 1, and $p = 10\%$ is typical. For $p = 0.10$, the above result shows that the 2D to 1D ratio in Control is $(2-p)/(1+p) = 1.9/1.1 = 1.7$ times the corresponding ratio in Test, meaning that 2D human users are over-represented in control, relative to test, by a factor of 1.7. 

The phenomenon of consumers with multiple devices constituting a higher fraction of the Control population than Test is merely a mathematical consequence of a non-constant device distribution. However, we have also observed in our data that multi-device consumers  have a higher baseline (i.e., no-exposure) response rate (on \textit{each} of their devices) than single-device consumers. This may be because consumers with multiple devices are more "prolific" internet users and/or perhaps more affluent, and as such tend to perform conversion actions at a higher rate than single-device consumers, even in the absence of any advertising exposure. Whatever the reason behind this empirical observation, if we simply assume for the sake of argument that multi-device consumers have a higher baseline response rate, and combine that with the mathematical result that these users are over-represented in Control, it is clear that the Control-group response rate would be \textit{higher} than that of the Test group, thus leading to a \textit{negative lift} measurement. We previously saw that negative lift could result from win bias and corrected for that via our methodology for Pre-Bid Randomization. Here we see that ID contamination can not only drown a positive signal to zero, but also potentially flip a positive signal to a negative one. Clearly Ad Lift measurement solution can be valid without addressing ID contamination.

\subsection{Remedies for Contamination}
In light of the above results showing how ID contamination leads to a degradation of lift measurement,  it is important to consider possible remedies that can account for ID contamination. We now consider how a simple modification of our methodology offers one such remedy. Perhaps unsurprisingly, the modification relies on the availability of information linking together identifers (i.e., userIDs) that belong to the same underlying consumer. At MediaMath this information is integrated into a solution which we refer to as \textit{Connected-ID}, or CID for short. The reader can think of a CID as the "master" ID connecting all the various userIDs (cookies, mobile advertising IDs, etc.) associated with a particular consumer. The primary source of information enabling these connections is login IDs, typically corresponding to email addresses that consumers use to log into accounts on advertiser websites or mobile apps, with the idea being that these login IDs (suitably hashed) are both unique (by defintion) and peristent over time\footnote{Though email addresses may of course change over time, it is assumed that the timescale over which ad effectiveness is measured (i.e., the PV window, $V_c$) is much shorter than the timescale over which consumers typically change email addresses, and thus that emails addresses are effectively stable for these purposes.} and thus when captured simultaenously with various userIDs, provide a way to associate those userIDs to the same underlying consumer. Note that the CID solution is \textit{deterministic}, in contrast to less accurate \textit{probabilistic} approaches which model device data, consumer behaviors, and other information to determine a probability that two or more userIDs are linked to the same underlying consumer. While it is beyond the  scope of our paper to produce a full account of the performance of probabilisitc solutions, our own analysis and the experience of our clients is that the precision and recall of these solutions, as measured against truth sets (where the underlying identity of each consumer is known), is such that they are not reliable enough for accurate targeting of consumers nor for accurate measurement of Ad Lift. Henceforth, we shall use the term CID to refer not only to MediaMath's solution, but to any solution that connects userIDs to a master ID corresponding to the underlying consumer, on the basis of deterministic data.

In general, no CID solution will be complete, meaning that while it may capture most or all of the userIDs for some consumers, it will not generally be able to resolve all userIDs to a CID for all consumers. In other words, some userIDs will have no CID mapping, e.g., because login ID information has not been captured for that consumer and joined to the CID graph, or because the device is new, etc. Despite incomplete (and possibly noisy) CID information, we can use a {\em sufficiently large} CID solution to improve our lift estimates with a simple modification: 

\begin{modification}[Treat the CID as a unit]
Discard userIDs for which there is no CID mapping. For the remaining userIDs, use the CID as the identifer instead of the userID. In particular, in the Pre-Bid Test/Control assignment, use the hash of the CID rather than the original userID. For all the Causal Lift calculations, ignore the original userID and use the CID instead. This will then yield a Causal Lift measurement at the CID level (or consumer level).
\end{modification}

If we assume that the CID information is \textit{complete} (i.e., it captures \textit{all} userID linkages), and \textit{correct} (i.e., all linked userIDs in fact belong to the same consumer), then it should be clear that there is no SUTVA violation (i.e., no interference among units, which are CIDs), and hence the methodology in Section \ref{sec-causal-est} works. 

Of course, in reality the CID information may be incomplete and incorrect. Incompleteness, however, is only an issue to the extent that it drives uncertainty in the estimation of Ad Lift and the confidence intervals around it, using the methodologies presented earlier. The more complete the CID solution is (i.e., the more userIDs it can link to CIDs), the smaller those errors will be. The CID solution merely needs to be {\em sufficiently large} enough to produce measurement results with sufficiently high confidence. How large does this need to be in practice? MediaMath's CID solution currently has 63 million CIDs in the US, corresponding to consumers with more than one userID. This is about 25\% of the estimated 250 million unique consumers in the US. As we will demonstrate in the next Section, this data has yielded Ad Lift measurement with high confidence. While this does not provide a lower bound on the absolute or relative size of the CID solution needed, it does demonstrate that a complete solution is not needed. For a given CID solution, one can simply evaluate the resulting Ad Lift and confidence intervals to empirically assess whether it is "large enough."

As for correctness, most CID solutions employ a variety of techniques to ensure the quality of the data is high. These often include supervised testing against known deterministic IDs, unsupervised testing to flag and filter anomalies (e.g., abnormally high numbers of userIDs mapping to a single CID), performance validation (i.e., looking at actual business results derived from the application of CID solutions to marketing activities), and other methods. A complete framework for evaluating the correctness of an arbitrary CID solution is beyond the scope of this paper. However, it is generally possible to increase the correctness of the CID information by sacrificing some coverage with the following further heuristic: 

\begin{modification}[Restrict the analysis CIDs with $k$ or more userIDs]
In addition to restricting to userIDs that have a (deterministic) mapping to a CID, further restrict to CIDs that link $k$ or more userIDs. 
\end{modification}

The practical justification for this modification is twofold: 
\begin{enumerate}
\item It removes singleton instances where a CID is mapped to only one userID. While this may correctly indicate that this consumer has only one device, we have noted in practice that such "one-hit wonder" occurrences often tend to arise from a variety of issues associated with failure to establish correct mappings. 
\item The larger the value of $k$, the more likely it is that each CID has connected most or all of the userIDs corresponding to a consumer, i.e., the more confident one can be in the CID solution's ability to correctly associate userIDs to CIDs.
\end{enumerate}

Clearly there is a trade-off with this approach: for higher $k$ we will have less data, resulting in wider confidence bounds around the measured lift. In practice, we have found the best results for $k=2$ or $k=3$, which produce noticeably different results vs. $k=1$. This is consistent with empirical studies (e.g., \cite{coey2016people}) showing that a vast majority ($> 70\%$) of consumers have fewer than 4 cookies, and that digital consumers own, on average, about 3.5 connected devices \cite{emarketer-2016}
though most of their media consumption (and hence ad exposure) occurs on 2 or 3 devices (e.g., smart-phones, laptops, and tablets, as opposed to gaming consoles, wearables, etc.). Thus, using $k=2$ or $k=3$ helps prevent ID contamination from overwhelming lift measurements, avoiding singletons and gearing the analysis around the empirically expected value for $k$.

We believe that the issues discussed in this paper -- notably around ID contamination and win bias -- have severely hampered previous attempts to measure Ad Lift. In MediaMath's experience, having observed many dozens if not hundreds of attempted measurements, spanning different advertiser verticals and different campaigns configurations, the results routinely come back indicating marginal lift, no lift, or negative lift. These results (in particular negative lift), run counter to common-sense expectations about the impact of advertising, and fly in the face of many billions of dollars of ad spend. The few tests that we have seen produce strong, positive lift tended to do so only for short-periods of time, quickly falling back into the noise. We believe this is because those approaches, unlike the current work, do not account for real-world constraints and complications. In the next Section, we present Ad Lift measurements using the methodology presented here, which yields significant, positive, and stable measurements.

\section{Experimental Results} \label{sec-results}

In table \ref{tab-results} we show results of the Causal lift ($ATL$) computation using the modified methodology of Section \ref{sec-contam}, for seven campaigns using bid opportunity data over a 30-day period ending March 21, 2017. 
The table shows various measures related to $ATL$ as well as the raw counts of Test and Control unique CIDs and conversions, and several intermediate values that illustrate our methodology. 
We now describe the columns in the table in some detail:

\begin{itemize}
	\item \textbf{id} is the ID of an ad campaign
	\item $ATT$ is the Average Treatment Effect on Treated, or $R_{TW} - R_{CW}$ as defined in Eq. \ref{eq-att-noncomp}. Note that by Lemma \ref{lem-att-prebid} $ATT = (R_T - R_C)/w$, where $w$ is the win rate (see below).
	\item $ATL$ is the Causal Lift as defined in Eq. \ref{eq-atl}, i.e., 
\begin{equation*}
	ATL = (R_{TW} - R'_{CW})/R'_{CW} = ATT/(R_{TW} - ATT) 
\end{equation*}
	\item $INC$ is the \textit{Incrementality}, as defined in Eq. \ref{eq-inc}, i.e., 
\begin{equation*}
	INC = (R_{TW} - R'_{CW})/R_{TW} = ATT/R_{TW}.
\end{equation*}

	\item The column names starting with "g" denote quantities related to the Gibbs-sampling methodology in Section \ref{sec-gibbs}: $g5$, $g50$, $g95$ denote the 5th percentile, 50th percentile (or median), and 95th percentile, respectively, of the distribution of possible values of $ATL$ that are consistent with the observed data, as generated by the Gibbs Sampling procedure. So one can treat the range $(g5, g95)$ as the 90\% confidence-interval for $ATL$. $gConf$ is the Gibbs-sampling-based "directional confidence" in the observed lift ($ATL$): we model the distribution of $ATL$ as a Gaussian with mean $\mu$ and standard-deviation $\sigma$ computed from the generated values of $ATL$, and then define $gConf$ as one minus the probability that we would have observed an $ATL$ as high as the one we observed, had the true value of $ATL$ been negative, under the assumed Gaussian distribution. 
	\item The 6 columns $TU, TC, TWU, TWC, CU, CC$ represent the raw observed counts that are the inputs to our methodology. $TU, TC$ are respecitvely the number of unique CIDs and conversions in the Test group. $TWU, TWC$ are the number of unique CIDs and conversions in the $TW$ (Test-Winners) population. $CU, CC$ are respecitvely the number of unique CIDs and conversions in the Control population. 
	\item $w$ is the win rate, defined as $TWU/TU$, expressed as a percentage.
	\item $R_T, R_C$ are the observed response rates of the Test and Control populations, respectively, expressed as percentages: $R_T = TC/TU$, $R_C = CC/CU$.
	\item $R_{TW}$ is the response rate of the Test-Winners group, equal to $TWC/TWU$, expressed as a percentage.
\end{itemize}

All of the various response rates, and $ATT$ (which is a difference in response rates $R_{TW} - R_{CW}$) are shown in the table in \textit{basis points} (bp), where one bp is 0.01\%.

\begin{sidewaystable}[!h] 
\small
\caption{\small Causal Lift ($ATL$) measurement results, along with various related computations and intermediate values, for seven campaigns over a 30-day period ending on March 21, 2017.}
\label{tab-results}
\begin{tabular}{|l|l|l|l|l|l|l|l|l|l|l|l|l|l|l|l|l|l|}
\hline
id & ATL & INC & ATT & gConf & g5  & g50 & g95  & $R_T$   & $R_C$   & $R_{TW}$  & $w$  & TU      & TC     & TWU    & TWC   & CU     & CC    \\
\hline \hline 
1   & 63  & 39  & 25  & 99    & 12  & 62  & 167  & 63   & 49   & 65   & 56 & 263,501  & 1,670   & 148,058 & 955   & 16,065  & 79    \\
2   & 18  & 15  & 14  & 100   & 7   & 18  & 30   & 57   & 52   & 93   & 42 & 2,195,456 & 12,609  & 918,316 & 8,573  & 145,216 & 748   \\
3   & 17  & 14  & 8   & 91    & -3  & 16  & 44   & 46   & 43   & 53   & 47 & 734,135  & 3,390   & 346,656 & 1,840  & 69,511  & 296   \\
4   & 534 & 84  & 6   & 100   & 74  & 335 & 3597 & 3    & 2    & 8    & 13 & 4,938,065 & 1,423   & 657,002 & 503   & 459,553 & 93    \\
5   & 153 & 60  & 6   & 94    & -11 & 109 & 1317 & 4    & 3    & 9    & 15 & 2,409,520 & 902    & 364,234 & 343   & 110,991 & 32    \\
6   & 5   & 5   & 46  & 100   & 3   & 5   & 8    & 629  & 610  & 887  & 40 & 1,955,475 & 122,968 & 787,613 & 69,874 & 205,131 & 12,520 \\
7   & 12  & 11  & 1   & 65    & -23 & 9   & 72   & 2    & 2    & 6    & 24 & 2,833,414 & 511    & 681,506 & 380   & 198,932 & 33    \\
\hline
\end{tabular}
\end{sidewaystable}

To illustrate how our methodology is applied, let us consider the calculations for campaign 1 (top row in the table):

\begin{example}\small [Calculation of $ATL$ for campaign 2]
We start with the six raw observed counts with values of $TU=263,501, TC=1,670, TWU=148,058, TWC=955, CU=16,065, and CC=79$. First we compute these response rates:
\begin{eqnarray*}
	R_{T} &= & TC/TU = 1670/263501 = 0.0063\% = 63bp\\
	R_C   &= & CC/CU = 70/16065 = 0.0049\%  = 49bp\\
	R_{TW} &=& TWC/TWU = 955/148058 = 0.0065\% = 65bp, \\
\end{eqnarray*}
and the win rate $w = TWU/TU = 56\%$.
We then compute the $ACE$, or Average Causal Effect (also called the Intent-to-Treat effect, $ITT$), as in Eq. \ref{eq-ace-noncomp}:
\begin{equation}
	ACE = R_T - R_C = 63bp - 49bp = 14bp,
\end{equation}
which means that \textit{at the Intent-to-Treat level}, the effect of the ad campaign is 14bp. From Eq. \ref{eq-att-noncomp} we see that the $ATT$ (Average Treatment Effect on Treated), is 
\begin{equation}
	ATT = ACE/w = 14bp/0.56 = 25bp = (R_{TW} - R_{CW}).
\end{equation}
This shows that the casual effect of the ad campaign \textit{on the exposed population}	is 25bp. Since $ATT$ is (by definition) equal to $R_{TW} - R_{CW}$, we infer that $R_{CW} = R_{TW} - ATT = 65bp - 25bp = 40bp$.
Now we can calculate $ATL$ and $INC$ (Eqs \ref{eq-atl} and \ref{eq-inc}) as follows:
\begin{eqnarray*}
	ATL &=& ATT/R_{CW} = 25/40 = 63\%	\\
	INC &=& ATT/R_{TW} = 25/65 = 39\%,
\end{eqnarray*}
which shows that the response rate of the exposed Test population is 63\% higher than that of the counterfactual Control winner population, and that out of the 65bp response rate of the exposed Test population, only 39\% is causally attributable to the ad campaign.
\qedsymbol
\end{example}

Looking across the campaigns, we can see a wide range of $ATL$ and $INC$ values, presumably owing to the different intrinsic effectiveness of these campaigns against their target consumers. Of note, the values for $ATL$ and $INC$ are all positive and the values of $gConf$ are above $90\%$ for all but one campaign, illustrating the utility of providing not only point estimates but also confidence measurements. We reiterate that these results are in stark contrast to our experience with other methodologies, which tend to produce marginal lift, no lift, or negative lift, and often with a high degree of inconsistency.

\section{Related Work}\label{sec-related}

The \textit{ghost ads} methodology \cite{Johnson_Ghost_2015} aims to estimate the causal/incremental effect of ads by \textit{identifying the counterfactual} winner types in the Control population. This hinges on being able to accurately simulate an auction and predict whether or not a Control consumer \textit{would have} seen the ad (i.e., would have \textit{won} the auction if the consumer had been assigned to Test).
This may be possible for a small handful of so-called “walled garden” media companies such as Google, Facebook, and Amazon, who can see  both the “buy” and “sell” sides of the auction process and dictate the outcome according to their internal algorithms and specifications, but in general for ad buyers this is not possible and such an approach cannot be implemented.
By contrast our methodology uses randomization and a correction for auction win bias to avoid the need to actually identify counterfactual winner types in the Control group. 

There is a strand of literature addressing the challenging problem of estimating the causal effect of advertising based on \textit{observational} data, i.e., from data collected during the normal course of campaign delivery, without setting up explicit randomized control tests. One such paper is \cite{stitelman2011estimating} which considers several ways to address the chief difficulty in estimating causality from observational data: the exposed and unexposed populations will not in general be equivalent. The authors consider various ways of correcting for this non-equivalence. Essentially these approaches rely on doing a counterfactual or "what-if" analysis \textit{analytically} via fitted models/estimators rather than experimentally via randomization. An important requirement in these approaches is that \textit{all} potentially confounding variables (i.e., those that affect assignment to the treatment group and also affect response rates) are observable, which is rather a strong requirement and one which we believe does not hold in the real world. Another prominent paper that discusses the observational approach to causality in the digital advertising context, is \cite{chan2010evaluating}. An interesting survey paper from Facebook \cite{gordon2016comparison} shows that observational methods often fail to produce results as good as true randomized experiments, even after taking into account thousands of confounding variables. In fact one of the methods presented in this paper is similar to our method for computing the $ATT$ (Eq \ref{eq-att-noncomp}). 

We believe observational  approaches are best-suited to estimating impact at the individual/unit level (i.e., $ICE$, Definition \ref{ice}), where counterfactuals cannot be established since individual consumers can of course only receive either the Test or Control treatment, but not both. Indeed, such individual-level estimates are needed, for example, when trying to determine the correct price to bid in a real-time for a bid opportunity, based on the expected incremental impact of that particular ad exposure against that particular consumer. At the aggregate level, however, observational approaches have several drawbacks: they are generally assumption-driven (in terms of the models/estimators used), they are prone to unknown biases in the construction of treatment populations, and they have proven to be difficult to validate in practice, leaving many advertisers questioning if their results can really be believed. While they do have the appeal of zero opportunity cost to advertisers (who would miss out showing ads to the Control group in a randomized test), we believe these downsides make such approaches sub-optimal for reliable Incrementality measurement at an aggregate level. By contrast, an experimental approach at the aggregate level (where randomization is well defined, and statistically equivalent counterfactual Test and Control groups can be established) -- provided it can account for the real-world factors addressed in this work -- can produce assumption-free, unbiased estimates that can be easily reproduced and verified.

 We note that in the category of observational approaches, there has been recent interest in applying Machine Learning techniques to estimate the \textit{Individual Causal Effect} $ICE$ (or Individual Treatment Effect, $ITE$) as a function of covariates (or features) of an individual entity. For example, \cite{athey2016recursive, Athey2015} present a decision-tree based approach to estimate the $ICE$; \cite{johansson2016learning} show how counterfactuals can be estimated by learning representations (using linear models or deep neural networks) that encourage similarity (or balance) between Test and Control populations; \cite{hartford2016counterfactual} introduce the concept of "Deep Instrumental Variables Networks" which are able to predict the counterfactual and hence estimate causal effects.

Two papers present the idea of using Gibbs Sampling in the context of causal analysis: \cite{Chickering1996} outline a method to apply Gibbs sampling to causal inference in the presence of non-compliance, and \cite{Barajas2012} present a similar method, applied to advertising campaign effectiveness measurement. Neither of these papers give a complete specification that can be easily implemented. Our Gibbs Sampling procedure is similar to these but we have introduced important simplifications, and specified the approach in a manner that is self-contained and easy to implement.

\section{Future Work}\label{sec-future}

In future work, we intend to explore various generalizations and extensions of this approach. These include: 1) adapting the methodology to estimate the Ad Lift from multiple campaigns and/or digital advertising channels, taking into account interaction effects, 2) specifying how such Ad Lift measurements should correctly inform advertising budget allocation decisions, and 3) Unifying this approach with a methodology for the valuation of individual bid opportunities, in a manner that accounts for incrementality at the bid-opportunity level.

\section{Acknowledgements}\label{sec-acknowledge}

We would like to express our gratitude to Himanish Kushary, Jason Lei, Jonathan Marshall,  and Caryl Dizon Yuhas for numerous helpful discussions and problem-solving sessions.

\bibliography{papers-mar29,all_papers,mendeley-bib}{}

\bibliographystyle{apalike}

\end{document}